\newtheorem{claim}{Claim}[section]
\newtheorem{theorem}{Theorem}
\newtheorem{proposition}[claim]{Proposition}
\newtheorem{corollary}[claim]{Corollary}
\newtheorem{definition}[claim]{Definition}
\theoremstyle{definition}
\newtheorem{remark}[claim]{Remark}
\def\NP{{\rm NP}}
\def\RP{{\rm RP}}
\def\P{{\rm P}}
\def\entro{s}
\def\di{{\partial i}}
\def\tZ{\widetilde{Z}}
\def\cH{{\cal H}}
\def\<{\langle}
\def\>{\rangle}
\def\prob{{\mathbb P}}
\def\naturals{{\mathbb N}}
\def\E{{\mathbb E}} 
\def\Var{{\sf{Var}}}
\def\reals{\mathbb{R}}
\def\sT{{\sf T}}
\def\E{\mathbb{E}}
\def\cP{{\cal P}}
\def\id{{\rm I}}
\def\eps{\varepsilon}
\def\Cov{{\rm Cov}}
\def\ind{{\mathbb{I}}}
\def\proj{{\sf P}}
\def\atanh{{\rm atanh}}
\def\hF{\widehat{F}}
\def\sign{{\rm sign}}
\def\brho{{\boldsymbol \rho}}
\def\htheta{\hat{\theta}}
\def\bhtheta{\boldsymbol{\hat{\theta}}}
\def\btheta{{\boldsymbol{\theta}}}
\def\bJ{{\boldsymbol J}}
\def\bv{{\boldsymbol{v}}}
\def\bs{{\boldsymbol{s}}}
\def\bts{{\boldsymbol{\tilde{s}}}}
\def\ts{\tilde{s}}
\def\bx{{\boldsymbol{x}}}
\def\by{{\boldsymbol{y}}}
\def\ba{{\boldsymbol{a}}}
\def\bu{{\boldsymbol{u}}}
\def\bb{{\boldsymbol{b}}}
\def\bT{{\boldsymbol{T}}}
\def\bX{{\boldsymbol{X}}}
\def\btX{{\boldsymbol{\widetilde{X}} }}
\def\tbtau{{\boldsymbol{\tilde{\tau}}}}
\def\btau{{\boldsymbol{\tau}}}
\def\bsigma{{\boldsymbol{\sigma}}}
\def\bzero{{\boldsymbol{0}}}
\def\hZ{\widehat{Z}}
\def\creals{\overline{\mathbb R}}
\author{Andrea~Montanari\footnote{Department of Electrical
    Engineering and Department of Statistics, Stanford University}}
\title{Computational Implications of\\ Reducing Data to Sufficient Statistics}
\begin{document}
\maketitle

\begin{abstract}
Given a large dataset and an estimation task, 
it is common to pre-process the data by reducing them to a set of
sufficient statistics.
This step  is often regarded as straightforward 
and advantageous (in that it simplifies
statistical analysis).  I show that  --on the contrary--  reducing
data to sufficient statistics can change a computationally tractable estimation problem into an
intractable one. I discuss connections with recent work in theoretical
computer science, and implications for some  techniques to estimate
graphical models.
\end{abstract}

\section{Introduction}

The idea of sufficient statistics is a cornerstone of statistical
theory and statistical practice. Given a dataset, evaluating a
set of sufficient statistics yields a concise representation that can
be subsequently used to design (for instance) optimal statistical
estimation procedures.
To quote a widely adopted textbook \cite{Lehmann1998}:
\begin{quote}
`It often turns out that some part of the data carries no information
about the unknown distribution and that $\bX$ can therefore be replaced
by some statistic $\bT= \bT(\bX)$ without loss of information.'
\end{quote}
The main point of the present paper is the following.
While  optimal statistical estimation can be performed solely on the basis
of sufficient statistics, \emph{reduction to sufficient statistics can
  lead to an explosion in computational complexity.} This phenomenon
is so dramatic that --after reduction to sufficient statistics-- a tractable
estimation task can become entirely intractable.

To be concrete, we shall consider the problem of estimating the
parameters of an exponential 
family over $\bx\in \{0,1\}^p$. Namely, we consider the following
density with respect to the counting measure on $\{0,1\}^p$:
\begin{align}
p_{\btheta}(\bx) = \frac{1}{Z(\btheta)}\, h(\bx) \, e^{\<\btheta,\bx\>}\,. \label{eq:ExponentialFamily}
\end{align}
Here $h:\{0,1\}^p\to\reals_{>0}$ will be assumed strictly positive,
$\btheta\in\reals^p$ and $\<\ba,\bb\> = \sum_{i=1}^pa_ib_i$ is the
standard scalar product in $\reals^p$. The partition function
$Z(\btheta)$ is determined by the normalization condition, yielding
$Z(\btheta) = \sum_{\bx\in\{0,1\}^p}h(\bx) \, e^{\<\btheta,\bx\>}$.

We assume to be given $n$ i.i.d. samples $\bX_1,\dots,\bX_n\sim
p_{\btheta}$ and want to estimate $\btheta$. 
Introduced the notation $\bX^{(n)} = (\bX_1,\dots,\bX_n)$ for
the dataset of $n$ samples, taking values in $\{0,1\}^{p\times n}$. 
A consistent estimator is a function $\bhtheta:\{0,1\}^{p\times
  n}\to\reals^p$
such that, almost surely with respect to $\bX_1,\dots,\bX_n\sim
p_{\btheta}$, we have
\begin{align}
\lim_{n\to\infty} \bhtheta(\bX^{(n)}) = \btheta\, .
\end{align}
A vector of sufficient
statistics for the model $p_{\btheta}$ is clearly given by the empirical average
\begin{align}
\bT(\bX^{(n)}) = \frac{1}{n}\sum_{\ell=1}^n \bX_{\ell}\, .\label{eq:SufficientStatistics}
\end{align}
Classical statistical theory shows that, for any estimator $\bhtheta$
as above, we can construct a new estimator $\bhtheta^{\rm new}$ with
the \emph{same statistical properties}, that depends only on the
sufficient statistics $\bT(\bX^{(n)})$. In particular, 
$\bhtheta^{\rm new}:[0,1]^p\to\reals^p$ and, if  $\bhtheta$ is
consistent, then $\bhtheta^{\rm new}$ is also consistent. Namely,
$p_{\btheta}$-almost surely
\begin{align}
\lim_{n\to\infty} \bhtheta^{\rm new}(\bT(\bX^{(n)})) = \btheta\, .
\end{align}
In this paper we are concerned with the computational complexity of
evaluating such consistent estimators. In particular we will 
ask whether, for a given estimation task, there exist consistent
estimators that can be computed in time polynomial in the model
dimensions $p$. (We refer to Section \ref{sec:Main} for formal definitions.)

Our main result is that there exist simple exponential families of
the form  (\ref{eq:ExponentialFamily}) for which (under standard
computational complexity assumptions) no polynomial-time estimator
exists that uses only sufficient statistics. On the other hand, for
the same models, polynomial time  estimators exist that use
the whole data $\bX^{(n)}$.

In other words, the estimation problem is computationally tractable if
we use the whole dataset, but becomes intractable once the
data are reduced to sufficient statistics. This suggest that reducing
data to sufficient statistics might be harmful.

\subsection{Classical argument for sufficient statistics}

It is instructive to review the standard argument used to prove that a
reduction in complexity entails no loss of information
\cite{Lehmann1998}, since  this is
already suggestive of a possible explosion in computational complexity. Given any estimator
$\bhtheta(\bX^{(n)})$ of the parameters $\btheta$, the argument constructs a randomized estimator 
$\bhtheta^{{\rm new}}(\bT)$ that only depends on the sufficient statistics $\bT$,
and has the same distribution as $\bhtheta(\bX^{(n)})$, in particular the same
risk.
The construction is fairly simple. Given $\bT$, we can sample
$\btX^{(n)}$ (an independent copy of $\bX^{(n)}$)
from the law $p_{\btheta}(\,\cdot\,|\bT)$, conditional on the observed sufficient
statistics $\bT(\btX^{(n)})=\bT$. By definition of sufficient
statistics, the conditional law
$p_{\btheta}(\,\cdot\,|\bT)$ does not depend on $\btheta$, and hence
we can  sample  $\btX^{(n)}$ without knowing $\btheta$. We then define
$\bhtheta^{{\rm new}}(\bT)\equiv \bhtheta(\btX^{(n)})$. This has
clearly the same risk as $\bhtheta(\bX^{(n)})$. In words, we were able to
generate new data as informative as the original one using the
sufficient statistics.

The problem with this argument is that sampling from the
conditional distribution of $p_{\btheta}$ given $\bT(\btX^{(n)})=\bT$
can be computationally hard. Indeed simple examples can be given for the
weight $h(\,\cdot\,)$  (see
Section \ref{sec:Example}) that make sampling from $p_{\btheta}(\,\cdot\,|\bT)$ --even approximately--
impossible unless $\RP=\NP$ ($\RP\neq\NP$ this is a standard
computational complexity
assumption that is analogous to $\P\neq\NP$, and better suited for
sampling problems, see Section
\ref{sec:Example} for more details).
In other words, this argument is based on a  reduction that is not
computationally efficient. As we shall see, in
some examples, no reduction is computationally efficient, not only
the particular one given here. 

\subsection{Some examples}
\label{sec:SomeExamples}

In order to build intuition on our main result, we present 
two examples of the model (\ref{eq:ExponentialFamily}). Both fits in
the general context of estimating discrete graphical models.
Ths problem has been the object of considerable interest across
statistics and machine learning during recent years 
\cite{bresler2008reconstruction,bento2009graphical,ravikumar2010high,anandkumar2012high}.
Our intractability results imply that, for both of these examples, one
can construct families of parameters that cannot be estimated in
polynomial time under standard computational assumptions.

\vspace{0.4cm}

\noindent{\bf Ising model (unknown magnetic fields).} Let $\bJ = (J_{ij})_{i,j\in [p]}$ be a
symmetric matrix $\bJ=\bJ^{\sT}$. 
For $x\in\{0,1\}$, let $s(x) = 1-2x\in\{+1,-1\}$.
The associated Ising model is the
probability measure on $\{0,1\}^p$ with density function:
\begin{align}
p_{\bJ,\btheta}(\bx) = \frac{1}{\widetilde{Z}(\bJ,\btheta)}\,
\exp\Big\{\sum_{1\le  i<j\le p}
J_{ij}s(x_i)s(x_j) - \frac{1}{2}\sum_{i=1}^p\theta_is(x_i)\Big\}\, . \label{eq:IsingModelDef}
\end{align}
Ising models are one of the best studied models in statistical
physics\cite{ising1925beitrag,grimmett2006random}.
Since the seminal work of Hopfield \cite{hopfield1982neural} and Hinton
and Sejnowski \cite{ackley1985learning}, they found application in numerous
 areas of machine learning, computer vision, clustering and spatial statistics.  

Following the statistical physics terminology, the pairwise parameters
$J_{ij}$ are called `couplings', and the singleton parameters
$\theta_i$ are referred to as `magnetic fields.' (The standard
normalization for the latter does not include the factor $(-1/2)$,
that is introduced here for convenience.) The probability distribution
(\ref{eq:IsingModelDef})
can be written in the form (\ref{eq:ExponentialFamily}) by letting
\begin{align}
h(\bx) = \exp\Big\{\sum_{1\le i<j\le p}
J_{ij}s(x_i)s(x_j) \Big\}\, .\label{eq:HIsing}
\end{align}
In other words, the problem of estimating the parameter $\btheta$ can
be described as estimating the magnetic fields in an Ising model with
\emph{known couplings}.

Special choices of the couplings $\bJ$ have attracted most of the work
within statistical physics and theoretical computer science. A
possible construction is based on a graph $G = (V,E)$ with vertex set
$V=[p]=\{1,\dots,p\}$. We then set
\begin{align}
J_{ij} = \begin{cases}
\beta & \mbox{ if $(i,j)\in E$,}\\
0 & \mbox{otherwise.}
\end{cases}
\end{align} 
For $\beta>0$ this is the so-called \emph{ferromagnetic Ising
  model}. For $\beta<0$, we get the  \emph{anti-ferromagnetic Ising
  model}. In Section \ref{sec:AntiFerro} we will show that estimating 
the parameters $\btheta$ from sufficient statistics in the latter
model is --in general-- computationally intractable. As a
consequence, the problem is
intractable for the more general model (\ref{eq:HIsing}).

\vspace{0.4cm}

\noindent{\bf Ising model (unknown couplings).} 
A natural question --in view of the last example-- is whether our
frameworks sheds any light into the hardness of estimating the couplings
of an Ising model. 
In order to clarify this point, consider the case
of a zero-field model on variables $\bts =
(\ts_0,\ts_1,\ts_2,\dots,\ts_p)\in\{+1,-1\}^{p+1}$,
with coupling matrix $\bJ = (J_{ij})_{i,j\in \{0,1,\dots,p\}}$ and no
magnetic field:
\begin{align}
q_{\bJ}(\bts) = \frac{1}{Z(\bJ)}\, \exp\Big\{\sum_{0\le i<j\le
  p}J_{ij}\ts_i\ts_j \Big\}\, .
\end{align}
Note that $-\bts$ is distributed as $\bts$. Hence, without loss of
generality, we can preprocess the data flipping them
arbitrarily. Equivalently, we can consider the distribution of the vector
$\bs = (s_1,s_2,\dots, s_p)$, whereby $s_i = \ts_0\ts_i$. If $\bts$
has distribution $q_{\bJ}$, then $\bs$ has distribution
\begin{align}
p_{\bJ}(\bs) = \frac{1}{Z(\bJ)}\, \exp\Big\{\sum_{1\le i<j\le
  p}J_{ij}s_is_j +\sum_{i=1}^p J_{0i}s_i\Big\}\, .
\end{align}
Let us now assume that all the couplings have been estimated except
the $(J_{0i})_{i\in [p]}$. Accordingly, we define the parameters
$\theta_i = -2J_{0,i}$ and rewrite the above distribution as
\begin{align}
p_{\bJ_{\setminus 0},\btheta}(\bs) = \frac{1}{Z(\bJ_{\setminus 0},\btheta)}\, \exp\Big\{\sum_{1\le i<j\le
  p}J_{ij}s_is_j -\frac{1}{2}\sum_{i=1}^p \theta_is_i\Big\}\, ,
\end{align}
where $\bJ_{\setminus 0}\equiv (J_{ij})_{i,j\in [p]}$.
This coincides with Eq.~(\ref{eq:IsingModelDef}) apart from the
trivial change of variables $s_i = s(x_i)$. It follows that estimating
a row of the coupling matrix, with knowledge of the other couplings,
in absence of singleton terms  (magnetic fields)  is equivalent to estimating the
singleton terms knowing all the couplings (in a
model with one less variable).

The construction in Section \ref{sec:AntiFerro} shows that there are
choices of the couplings $\bJ_{\setminus 0}$ for which estimating the
parameters $\btheta$ from sufficient statistics is
intractable. Equivalently,  it is (in general)  intractable to
estimate $(J_{0i})_{i\in [p]}$ using the knowledge of
$(J_{ij})_{i,j\in [p]}$, and of the empirical averages of $(s_i)_{i\in [p]} =(\ts_0\ts_i)_{i\in]p]}$.

\subsection{Paper outline}

The rest of the paper is organized as follows. 
In Section \ref{sec:Main} we state formally our main results. Under
technical assumptions this shows that, if there is a computationally efficient
estimator $\bhtheta(\bT)$ that use only sufficient statistics, then
the partition function $Z(\bzero)$ can be approximated, also
efficiently. In Section \ref{sec:Example} we construct a family of
functions $h(\,\cdot\,)$ for which approximating  $Z(\bzero)$ in polynomial
time is impossible unless $\RP=\NP$.  As a consequence of our main
theorem, no efficient parameter estimator from sufficient statistics
exists for these models (unless $\RP=\NP$). Remarkably, we show that a
simple consistent estimator can be developed using the data
$\bX^{(n)}$, for the same models.
 
The example in Section \ref{sec:Example}  is an undirected graphical model,
and indeed intractability of computing approximations to $Z(\bzero)$
is quite generic in this context. 
In Section \ref{sec:Discussion}  we discuss relations to the
literature in algorithms and graphical models estimation.

\subsection{Notations}

We will use $[n] = \{1,2,\dots, n\}$  to denote the set of first $n$ integers.
Whenever possible, we will follow the convention of denoting
deterministic values by lowercase letters (e.g. $x,y,z,\dots$) and
random variables by uppercase letters (e.g. $X,Y,Z,\dots$). We reserve
boldface for vectors and matrices, e.g. $\bx$ is a deterministic
vector or matrix, and $\bX$ a random vector or matrix. 
For a vector $\bx\in\reals^m$, and a set $A\subseteq [m]$,
$\bx_A$ denotes the subvector indexed by $A$.
The components of $\bx$ are denoted by $(x_1,x_2,\dots,x_m)$.

Given  $f:\reals^m\to\reals$, we denote by $\nabla f(\bx)\in\reals^m$
its gradient at $\bx$, and by $\nabla^2 f(\bx)\in\reals^{m\times m}$
it Hessian. Whenever useful, we add as a subscript the variable with
respect to which we are differentiating as, for instance, in
$\nabla_{\bx}f(\bx)$. 
%
%
\section{Computational hardness of consistent parameter estimation}
\label{sec:Main}

As mentioned above, we consider the problem of consistent estimation
of the parameters $\btheta$ of the model (\ref{eq:ExponentialFamily}),
from the sufficient statistics
(\ref{eq:SufficientStatistics}). Throughout we will assume $h(\bx)>0$
strictly for all $\bx\in\{0,1\}^p$.

As $n\to\infty$, $\bT(\bX^{(n)})$ converges --by the law of large
numbers-- to the population mean
\begin{align}
\btau_*(\btheta) \equiv \E_{\btheta}\{\bX\} =
\frac{1}{Z(\btheta)}\sum_{\bx\in\{0,1\}^p}\bx\, h(\bx) \,
e^{\<\btheta,\bx\>}\, .\label{eq:TauDef}
\end{align}
Here and below $\E_{\btheta}$ denotes expectation with respect to the
distribution $p_{\btheta}$.
We use the $*$ subscript to emphasize that this $\btau_*$ is a
function, and distinguish it from a specific value $\btau\in (0,1)^p$.
A consistent estimator based on the sufficient statistics $\bT$ must
necessarily  invert this function in the limit $n\to\infty$. This
motivates the following definition.
\begin{definition}\label{def:Estimator}
For $p\in\naturals$, let $\cH_p = \{h(\,\cdot\,)\}$ be a set of functions $h:\{0,1\}^{p}\to\reals_{>0}$.
A \emph{polynomial consistent estimator from sufficient statistics} for the model
(\ref{eq:ExponentialFamily}) with $h\in\cH_p$
is an algorithm that given $\btau= (\tau_1,\tau_2,\dots,\tau_p)\in (0,1)^p$, and a precision
parameter $\xi>0$, returns $\bhtheta(\btau,\xi)$ such that 
\begin{enumerate}
\item $\|\bhtheta(\btau,\xi)-\btheta\|_2\le \xi$ for some $\btheta$
  such that  $\btau_*(\btheta) = \btau$.
\item The algorithm terminates in time polynomial in $1/\xi$,
  $p$, the maximum description length of any of the $\tau_i$, and the
  description length of $h\in\cH_p$.
\end{enumerate}
\end{definition}
Two remaks are in order. The terminology adopted here (in particular,
the use of `consistent' and `polynomial') is motivated by the
following remarks. First, in the terminology of complexity theory,
this is a `fully polynomial time approximation scheme.' 

Second,  as discussed below, there is indeed a unique, continuous,
function $\btheta_*:(0,1)^p\to\reals^p$ such that
$\btau_*(\btheta_*(\btau)) =\btau$. This implies that $\btheta$ is
indeed a consistent estimator in the sense that for a sequence
$\xi_n\to 0$, we have, almost surely and
hence in probability,
\begin{align}
\lim_{n\to\infty} \bhtheta(\bT(\bX^{(n)}),\xi_n) = \btheta\, .
\end{align}
Finally, in the following we shall occasionally drop the qualification
`from sufficient statistics' when it is clear that we are considering
estimators that only use sufficient statistics.

We next state our main theorem.
\begin{theorem}\label{thm:Main}
Assume $\cH_p$, $p\ge 1$,  to be a family of weight functions 
 such that the the following three conditions hold:
\begin{enumerate}
\item[{\sf C1.}] There exists $\delta\in (0,1/2)$ such that, for any
  $p\in\naturals$, any $h\in\cH_p$, and any $i\in [p]$
\begin{align}
\tau_{*,i}(\btheta=\bzero)\in (\delta, 1-\delta)\, .
\end{align}
\item[{\sf C2.}]There exists a polynomial $L(p)$ such that, for any
  $p\in\naturals$, any $h\in\cH_p$, we have, for all $\btheta$ such
  that $\btau_*(\btheta)\in [\delta,1-\delta]^p$,
\begin{align}
\Cov_{\btheta}(\bX)\equiv \E_{\btheta}\big\{\bX\bX^{\sT}\big\}-
\E_{\btheta}\big\{\bX\big\}
\E_{\btheta}\big\{\bX\big\}^{\sT}\succeq\frac{1}{L(p)}\, \id_{p\times p}\, .\label{eq:ConditionC2}
\end{align}
\item[{\sf C3.}] There exists a polynomial $K(p)$ such that, for any
  $p\in\naturals$, any $h\in\cH_p$
\begin{align}
\max_{\bx,\by\in \{0,1\}^p}\big|\log h(\bx)-\log h(\by)\big|\le K(p)\, .\label{eq:ConditionC3}
\end{align}
\end{enumerate}
If there exists a polynomial consistent estimator for the model
$\cH=(\cH_p)_{p\ge 1}$,
then, for any $\eps>4p\delta ( K(p)+\log(4/\delta))$,  there exists an algorithm polynomial in the description length
of $h$, and in $(1/\eps)$, returning $\hZ$ such that 
\begin{align}
e^{-\eps}\, Z(\bzero)\le \hZ\le e^{\eps}\, Z(\bzero)\, . \label{eq:ApproxBound}
\end{align}
\end{theorem}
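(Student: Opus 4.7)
My approach is to turn the estimator $\bhtheta(\btau,\xi)$ into an oracle for $\btheta_*=\btau_*^{-1}$ and recover $\log Z(\bzero)$ by thermodynamic integration in mean-parameter space. Writing $A(\btheta):=\log Z(\btheta)$, for any smooth path $s\mapsto\btau_s\in(0,1)^p$ with $\btheta_s=\btheta_*(\btau_s)$, the identity $\frac{d}{ds}A(\btheta_s)=\langle\btau_s,\dot\btheta_s\rangle$ and integration by parts yield
\begin{align*}
A(\btheta_1)-A(\btheta_0)\;=\;\langle\btau_1,\btheta_1\rangle-\langle\btau_0,\btheta_0\rangle-\int_0^1\langle\dot\btau_s,\btheta_*(\btau_s)\rangle\,ds,
\end{align*}
whose right-hand side involves only $\btheta_*$ (queryable through $\bhtheta$) and $\dot\btau_s$ (freely chosen). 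I would pick $\btau_0$ so that $A(\btheta_0)$ is computable in closed form up to small error, find $\tilde\btau_1$ close enough to $\btau_*(\bzero)$ that $\btheta_1=\btheta_*(\tilde\btau_1)\approx\bzero$, and evaluate the right-hand side via an $N$-point Riemann sum of estimator calls.

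\textbf{Starting point.} I would take $\btau_0=\delta\mathbf{1}\in[\delta,1-\delta]^p$, so that C2 applies. Since $\E_{\btheta_0}[X_i]=\delta$, a union bound gives $p_{\btheta_0}(\bzero)=h(\bzero)/Z(\btheta_0)\ge 1-p\delta$, which combined with the trivial $Z(\btheta_0)\ge h(\bzero)$ yields $|A(\btheta_0)-\log h(\bzero)|\le -\log(1-p\delta)=O(p\delta)$. The coordinate-wise stationarity condition $\E_{\btheta_0}[X_i]=\delta$ together with C3 (which bounds the ``conditional-weight'' log-ratio by $K(p)$) further gives $|\theta_{0,i}|\le\log(1/\delta)+K(p)+O(\delta)$, so that the boundary term satisfies $|\langle\btau_0,\btheta_0\rangle|\le p\delta(\log(1/\delta)+K(p))$ and $\btheta_0$ has bit length polynomial in $K(p)$ and $\log(1/\delta)$.

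\textbf{Finding the endpoint -- the main obstacle.} The hard step is producing $\tilde\btau_1\approx\btau_*(\bzero)$ using only the estimator. The key observation is that $\btau_*(\bzero)$ is the unique minimizer over $[\delta,1-\delta]^p$ of the Legendre conjugate $A^*(\btau)=\langle\btau,\btheta_*(\btau)\rangle-A(\btheta_*(\btau))$, since $\nabla A^*(\btau)=\btheta_*(\btau)$ vanishes precisely at $\btau_*(\bzero)$, which by C1 lies strictly interior. Its Hessian $\nabla^2 A^*(\btau)=\Cov_{\btheta_*(\btau)}^{-1}$ has operator norm at most $L(p)$ by C2, and smallest eigenvalue at least $4/p$ by Popoviciu's variance inequality applied to arbitrary linear combinations of the $\{0,1\}$-valued coordinates. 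Thus $A^*$ has polynomial condition number, and projected gradient descent on $[\delta,1-\delta]^p$ with gradient estimated by $\bhtheta$ converges geometrically, reaching $\|\tilde\btau_1-\btau_*(\bzero)\|_2\le\xi$ in $\mathrm{poly}(p,L(p),\log(1/\xi))$ iterations.

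\textbf{Assembly and error bookkeeping.} With $\btau_0$ and $\tilde\btau_1$ in hand I would use the straight-line path $\btau_s=(1-s)\btau_0+s\tilde\btau_1\subset[\delta,1-\delta]^p$, approximate the integral by an $N$-point Riemann sum querying $\bhtheta$ at each grid point, and substitute $A(\btheta_0)\approx\log h(\bzero)$. The Lipschitz bound $\|\nabla\btheta_*\|\le L(p)$ from C2 controls the Riemann error by $O(L(p)p/N)$; each estimator call contributes $O(\sqrt{p}\,\xi)$ to the boundary and integral terms; and the residual $|A(\bzero)-A(\btheta_*(\tilde\btau_1))|$ is at most $\sqrt{p}\,L(p)\,\xi$ by Lipschitzness of $A$ in $\btheta$. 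All tunable errors can be driven below any preset threshold by polynomial choices of $N$ and $1/\xi$; combining the Step 1 substitution error with the boundary-term magnitude, the remaining additive error on $\log\hZ$ is of order $p\delta(K(p)+\log(1/\delta))$, matching the hypothesis $\eps>4p\delta(K(p)+\log(4/\delta))$.
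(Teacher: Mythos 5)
Your proposal is correct and follows the same overall strategy as the paper: use the estimator as an (approximate) oracle for $\btheta_*=\btau_*^{-1}$, i.e.\ for the gradient of the Legendre dual, locate $\btau_*(\bzero)$ by projected gradient over $[\delta,1-\delta]^p$, and recover $\log Z(\bzero)$ by integrating $\btheta_*$ along a straight line from the corner point $\delta\mathbf{1}$ via a Riemann sum of estimator calls. The differences are in the supporting steps, and they are worth noting. The paper works directly with the concave free energy $F=-A^*$, so its output is just $F(\btau^{t_0})$ reconstructed by line integration; this has no boundary terms, and consequently only a function-value guarantee $F(\btau_*(\bzero))-F(\btau^{t_0})\le\eps/4$ is needed, obtained from the standard sublinear $O(Lp/t)$ rate without any strong convexity. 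Your formulation through $A$ and integration by parts introduces the boundary terms $\<\btau_0,\btheta_0\>$ and $\<\btau_1,\btheta_1\>$, which forces you to prove convergence of the \emph{iterates} to $\btau_*(\bzero)$; you supply this with a genuinely different (and nice) ingredient, namely $\Cov_{\btheta}(\bX)\preceq (p/4)\,\id$ by Popoviciu, hence $4/p$-strong convexity of $A^*$, giving geometric convergence — a stronger optimization guarantee than the paper needs, at the cost of an inexact-gradient analysis you only sketch. For the base point, the paper uses the Gibbs variational characterization of $F$ (entropy at most $p\entro(\delta)$ plus the $K(p)$ energy bound), which works uniformly for all $\delta\in(0,1/2)$; your union-bound estimate $|A(\btheta_0)-\log h(\bzero)|\le-\log(1-p\delta)$ together with the coordinate-wise bound $|\theta_{0,i}|\le K(p)+\log(1/\delta)$ is more elementary but silently assumes $p\delta<1$ (indeed $p\delta\le 1/2$ for the constants to fit). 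This is not a real gap — in the complementary regime the hypothesis forces $\eps>4K(p)$, and then $\hZ=2^p h(\bzero)$ already achieves ratio $e^{K(p)}\le e^{\eps}$ trivially — but you should add a sentence handling it, and likewise spell out the inexact-gradient recursion and the bit-length control of the iterates, both of which the paper treats in its appendix.
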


The main implication of this theorem is negative. 
If the problem of approximating $Z(\bzero)$ is intractable, it follows
from the above that there is no polynomial consistent estimator.
We will show in Section \ref{sec:Example} that we can use
recent results in complexity theory to construct a fairly simple class $\cH = \{h\}$
such that:
\begin{itemize}
\item An approximation scheme does not exist for $Z(\bzero)$
under a standard complexity theory  assumption known as $\RP\neq \NP$
(and analogous to $\P\neq \NP$, see Section \ref{sec:Example}). 
\item Remarkably, a consistent and computationally efficient estimator
  $\bhtheta = \bhtheta(\bX^{(n)})$ of the model parameters exists!
  However, this is not a function only of  the sufficient statistics. 
\end{itemize}
As a direct consequence, we have the following.
\begin{corollary}\label{coro:Intractable}
There are classes of models $\cH$ for which no polynomial consistent
estimator of the parameters from sufficient statistics exists, unless $\RP=\NP$.
\end{corollary}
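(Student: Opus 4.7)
The plan is to combine Theorem \ref{thm:Main} with the explicit construction promised in Section \ref{sec:Example}. Concretely, the strategy has three steps: (i) exhibit a class $\cH = (\cH_p)_{p\ge 1}$ fulfilling hypotheses {\sf C1}, {\sf C2}, {\sf C3}; (ii) establish that, for this very class, approximating $Z(\bzero)$ to within the required multiplicative precision $e^{\eps}$ cannot be done in polynomial time unless $\RP = \NP$; (iii) invoke Theorem \ref{thm:Main} contrapositively.

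The candidate class, anticipating Sections \ref{sec:Example} and \ref{sec:AntiFerro}, is an anti-ferromagnetic Ising weight $h(\bx) = \exp\{\sum_{i<j} J_{ij}s(x_i)s(x_j)\}$, with the couplings $J_{ij}$ chosen on bounded-degree graphs in the non-uniqueness regime. I would then verify the three conditions. Condition {\sf C3} reduces to a bound on $\sum_{i<j}|J_{ij}|$, which is $O(p)$ for bounded-degree graphs with bounded couplings. Condition {\sf C1} is obtained from the symmetry $h(\bx) = h(\bx')$ with $x'_i = 1 - x_i$, which forces $\tau_{*,i}(\bzero) = 1/2$; any $\delta < 1/2$ will therefore do. Condition {\sf C2}, the uniform spectral lower bound $\Cov_\btheta(\bX) \succeq L(p)^{-1}\id$ for all $\btheta$ with $\btau_*(\btheta) \in [\delta,1-\delta]^p$, I would obtain from strong convexity of the log-partition function on the interior of the mean-parameter polytope, using the explicit quantitative version supplied by the construction of Section \ref{sec:Example}.

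With those conditions in hand, the contrapositive is immediate: a polynomial consistent estimator from sufficient statistics would, via Theorem \ref{thm:Main}, yield an algorithm producing $\hZ$ satisfying $e^{-\eps}Z(\bzero)\le \hZ\le e^{\eps}Z(\bzero)$ for every $\eps > 4p\delta(K(p)+\log(4/\delta))$, in time polynomial in $1/\eps$ and in the description length of $h$. The complexity-theoretic lower bound provided in Section \ref{sec:Example}, of Sly--Sun type for the anti-ferromagnetic Ising partition function, precludes exactly such algorithms unless $\RP = \NP$, which establishes the corollary.

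The delicate point, and the one I expect to be the main obstacle, is aligning the approximation strength demanded by Theorem \ref{thm:Main} with the strength of hardness that is actually available: since the threshold $4p\delta(K(p)+\log(4/\delta))$ is a polynomial in $p$, the example must be engineered so that approximating $Z(\bzero)$ even to within a factor of $e^{\mathrm{poly}(p)}$ is intractable, rather than merely ruling out an FPRAS. Choosing the parameters in the anti-ferromagnetic construction (graph degree, coupling strength, and the constant $\delta$) so that this matching holds is where the argument requires the most care, and it is precisely what the construction in Section \ref{sec:Example} must deliver in addition to the abstract reduction of Theorem \ref{thm:Main}.
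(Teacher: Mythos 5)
Your overall architecture is the same as the paper's: take the anti-ferromagnetic Ising class on bounded-degree ($k$-regular) graphs with $\beta>\atanh(1/(k-1))$, verify {\sf C1}--{\sf C3}, and play Theorem \ref{thm:Main} contrapositively against the Sly--Sun hardness of approximating $Z(\bzero)$. Your treatments of {\sf C1} (global spin-flip symmetry gives $\tau_{*,i}(\bzero)=1/2$ exactly, even cleaner than the paper's conditional-probability bounds) and {\sf C3} ($K(p)=O(p)$ from $|E|=kp/2$) are fine. The gap is in {\sf C2}. ``Strong convexity of the log-partition function on the interior of the mean-parameter polytope'' is only Fact 3 of Proposition \ref{prop:Expo}, whose generic bound is of order $\min_{\bx}p_{\btheta}(\bx)^2$, i.e.\ \emph{exponentially} small in $p$; condition {\sf C2} demands $\Cov_{\btheta}(\bX)\succeq L(p)^{-1}\id$ with $L$ polynomial, uniformly over the a priori unbounded set $\{\btheta:\btau_*(\btheta)\in[\delta,1-\delta]^p\}$. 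The actual work, which your plan defers to ``the explicit quantitative version supplied by the construction'' (i.e.\ to the very thing you must prove), is: (i) show via the Markov property that $\btau_*(\btheta)\in[\delta,1-\delta]^p$ forces $\|\btheta\|_{\infty}\le 2\beta k+\log((1-\delta)/\delta)=O(\log p)$; (ii) deduce a per-site conditional variance bound $\Var_{\btheta}(X_i\mid\bX_{\di})\ge 1/L_0(p)$; (iii) convert this into the matrix bound by conditioning on the complement of a greedily extracted independent set, losing only a factor $k+1$. None of this follows from generic strong convexity.

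The second problem is your proposed resolution of the $\eps$--$\delta$ mismatch, which points in the wrong direction. You cannot ``engineer the example so that approximation within a factor $e^{\mathrm{poly}(p)}$ is intractable'': since $1\le h_{G,\beta}(\bx)\le e^{\beta kp}$, one has $2^p\le Z(\bzero)\le 2^pe^{\beta kp}$, so the trivial guess $\hZ=2^pe^{\beta kp/2}$ already achieves ratio $e^{\beta kp/2}$; only hardness at ratio $e^{\eps_0 p}$ for a \emph{small} $\eps_0$ is available (this is exactly what Theorem \ref{thm:SlySun} gives), and hardness at any superlinear polynomial exponent is simply false. If you keep $\delta$ a constant, as your phrase ``the constant $\delta$'' suggests, the threshold $4p\delta(K(p)+\log(4/\delta))$ is $\Theta(p^2)$ (because $K(p)=\Theta(p)$), Theorem \ref{thm:Main} then only promises an approximation ratio that is trivially achievable anyway, and no contradiction can be extracted. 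The fix is on the other side: $\delta$ is yours to choose, your symmetry argument makes {\sf C1} valid for every $\delta<1/2$ including $p$-dependent ones, and {\sf C2} can be verified for polynomially small $\delta$; taking $\delta=1/(10pK(p))$ drives the threshold down to $O(1)$, so Theorem \ref{thm:Main} yields (say) an $e^{2}$-factor approximation of $Z(\bzero)$, which contradicts the $e^{\eps_0 p}$-hardness once $p\ge 2/\eps_0$. Without this $p$-dependent choice of $\delta$ (and the accompanying {\sf C2} verification at that $\delta$), the reduction does not close.
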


\begin{remark}
Note that, neither the assumption nor the conclusion of Theorem
\ref{thm:Main} (or of Corollary \ref{coro:Intractable}) depend on the
number of samples $n$. 
In particular, the  map $\btau\mapsto\bhtheta(\btau,\xi)$ in Definition
\ref{def:Estimator} is a map  that approximately inverts  the moment map
$\btheta\mapsto \btau_*(\btheta) = \E_{\btheta}\{\bX\}$. 

Therefore, Theorem
\ref{thm:Main} and Corollary \ref{coro:Intractable} can be read as
stating that the map $\btheta\mapsto \btau_*(\btheta)$ cannot be
(approximately) inverted in polynomial time.
\end{remark}

The rest of this section is devoted to the proof of Theorem \ref{thm:Main}.
It is useful to first recall a few well-known properties of  exponential
families, as specialized to the present setting. While these are
consequences of fairly standard general statements (see for instance
\cite{efron1978geometry,Lehmann1998}), 
we present self-contained proofs in Appendix  \ref{app:Expo} for the readers'
convenience.

We will use standard statistics notations for the \emph{cumulant generating function}
(also known as  \emph{log-partition function}), $A:\reals^p\to\reals$,
\begin{align}
A(\btheta) \equiv \log Z(\btheta)  = \log\Big\{\sum_{\bx\in\{0,1\}^p}
h(\bx) \, e^{\<\btheta,\bx\>}\Big\}\, ,
\end{align}
and its Legendre-Fenchel transform $F:(0,1)^p\to\reals$, which we
shall call the \emph{free energy},
\begin{align}
F(\btau) = \inf_{\btheta\in\reals^p}
\big(A(\btheta)-\<\btau,\btheta\>\big)\, .
\end{align}
\begin{proposition}\label{prop:Expo}
Given a strictly positive $h:\{0,1\}^p\to\reals_{>0}$, the following
hold:
\begin{enumerate}
\item[{\sf Fact1.}] The function $\btau_*: \reals^p\to (0,1)^p$ is $C_{\infty}(\reals^p)$.
\item[{\sf Fact2.}] Recalling that
  $\Cov_{\btheta}(\bX)\in\reals^{p\times p}$ denotes the covariance
  matrix of the law $p_{\btheta}$, we have
\begin{align}
\nabla_\btheta A(\btheta) & = \btau_*(\btheta)\, ,\\
\nabla_{\btheta}^2A(\btheta) & = \nabla_{\btheta}\btau_*(\btheta)  = \Cov_{\btheta}(\bX)\, .\label{eq:HessianCov}
\end{align}
\item[{\sf Fact3.}]$\Cov_{\btheta}(\bX)\succeq
  c(\btheta)\, \id_p$ for some continuous strictly positive $c(\btheta)>0$.
\item[{\sf Fact4.}] $\btau_*: \reals^p\to (0,1)^p$ is a bijection. We will denote by $\btheta_*:(0,1)^p\to\reals^p$ the
  inverse mapping.
\item[{\sf Fact5.}] The function $F:(0,1)^p\to\reals^p$ is concave and
  $C_{\infty}((0,1)^p)$ with 
\begin{align}
A(\btheta) & = \max_{\btau \in
  (0,1)^p}\big\{F(\btau)+\<\btau,\btheta\>\big\} = F(\btau_*(\btheta))+\<\btau_*(\btheta),\btheta\>\, ,\label{eq:InverseTransform}\\
\nabla_{\btau} F(\btau) &= -\btheta_*(\btau)\,\\
\nabla_{\btau}^2 F(\btau)  & = - \nabla_{\btau}\btheta_*(\btau)  =
-\big[\nabla^2_{\btheta}A(\btheta_*(\btau))\big]^{-1}\, . \label{eq:InverseHessian}
\end{align}
\item[{\sf Fact6.}] Let $\cP_p = \cP(\{0,1\}^p)$ be the set of
  probability distributions over $\{0,1\}^p$, and denote by $H(q) =
  -\sum_{\bx\in\{0,1\}^p}q(\bx)\log q(\bx)$ the Shannon entropy of
  $q\in\cP_p$.
Then 
\begin{align}
F(\btau)  = \max_{q\in\cP_p}\Big\{H(q)+ \E_q\log h(\bX) \;\mbox{ such that }
\E_q\{\bX\} = \btau\Big\}\, . \label{eq:RestrictedGibbs}
\end{align}
\end{enumerate}
\end{proposition}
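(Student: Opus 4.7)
The plan is to dispatch Facts 1--3 by direct calculation, then obtain Facts 4 and 5 from convex duality, and finally derive Fact 6 via a Lagrangian argument.

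For Facts 1 and 2, I would start from $Z(\btheta)=\sum_{\bx\in\{0,1\}^p} h(\bx) e^{\<\btheta,\bx\>}$, which is a finite positive sum of analytic functions of $\btheta$, hence $C_\infty$ and strictly positive, so $A(\btheta)=\log Z(\btheta)\in C_\infty(\reals^p)$. Differentiating under the sum gives $\partial_i A(\btheta)=Z(\btheta)^{-1}\sum_\bx x_i h(\bx)e^{\<\btheta,\bx\>}=\E_\btheta X_i=\tau_{*,i}(\btheta)$, and a second differentiation gives $\partial_j \partial_i A(\btheta)=\E_\btheta(X_iX_j)-\E_\btheta X_i\E_\btheta X_j$. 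This proves Facts 1 and 2. For Fact 3, the covariance is $\bv^\sT\Cov_\btheta(\bX)\bv=\Var_\btheta(\<\bv,\bX\>)$; since $h>0$ on $\{0,1\}^p$ every configuration has positive probability, and for $\bv\ne 0$ one can pick $i$ with $v_i\ne 0$ and compare the pair of configurations differing only at $i$ to conclude that $\<\bv,\bX\>$ is non-degenerate. Hence $c(\btheta):=\lambda_{\min}(\Cov_\btheta(\bX))>0$, and continuity in $\btheta$ follows from continuity of $\Cov_\btheta(\bX)$ and of the smallest eigenvalue of a symmetric matrix.

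For Fact 4, Fact 3 implies $\nabla^2 A=\Cov_\btheta(\bX)\succ 0$, so $A$ is strictly convex and $\btau_*=\nabla A$ is injective. The image lies in $(0,1)^p$ since each $X_i\in\{0,1\}$ has positive probability for both values. For surjectivity, fix $\btau\in(0,1)^p$ and show that $\Phi_\btau(\btheta):=A(\btheta)-\<\btau,\btheta\>$ is coercive; then strict convexity yields a unique minimizer $\btheta_*(\btau)$, necessarily satisfying $\nabla A(\btheta_*(\btau))=\btau$. Writing $\btheta=t\bu$ with $\|\bu\|=1$, one has $A(t\bu)=\log\sum_\bx h(\bx)e^{t\<\bu,\bx\>}$, which grows (up to $O(1)$) like $t\,\max_\bx\<\bu,\bx\>$ as $t\to+\infty$ and like $t\,\min_\bx\<\bu,\bx\>$ as $t\to-\infty$. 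The key step---and the main technical point I expect to spell out carefully---is that for $\btau\in(0,1)^p$ and any $\bu\ne 0$ one has $\min_{\bx\in\{0,1\}^p}\<\bu,\bx\><\<\btau,\bu\><\max_{\bx\in\{0,1\}^p}\<\bu,\bx\>$, because the endpoints correspond to corners of the cube and $\btau$ lies strictly inside. This strict inequality implies $\Phi_\btau(t\bu)\to+\infty$ in every direction, giving coercivity.

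For Fact 5, the map $\btheta_*(\btau):=\arg\min_\btheta \Phi_\btau(\btheta)$ is well-defined by Fact 4, so $F(\btau)=A(\btheta_*(\btau))-\<\btau,\btheta_*(\btau)\>$. Concavity of $F$ follows since it is an infimum of affine functions of $\btau$. The identity $A(\btheta)=\max_\btau(F(\btau)+\<\btau,\btheta\>)$ is Fenchel biconjugation applied to the convex lower semicontinuous $A$; the maximizer $\btau=\btau_*(\btheta)$ comes from stationarity. Since $\nabla\btau_*=\Cov_\btheta(\bX)$ is non-singular by Fact 3, the inverse function theorem gives $\btheta_*\in C_\infty((0,1)^p)$ and $\nabla\btheta_*(\btau)=[\nabla^2 A(\btheta_*(\btau))]^{-1}$. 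The envelope theorem applied to $F(\btau)=A(\btheta_*(\btau))-\<\btau,\btheta_*(\btau)\>$ (the $\btheta_*$-dependence cancels by first-order optimality) yields $\nabla_\btau F(\btau)=-\btheta_*(\btau)$; differentiating once more gives the Hessian formula and $C_\infty$ smoothness of $F$.

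For Fact 6, I would use a Lagrangian argument on the concave program $\max_q\{H(q)+\E_q\log h(\bX):\E_q\bX=\btau,\ \sum_\bx q(\bx)=1\}$. Stationarity in $q(\bx)$ of $H(q)+\E_q\log h-\<\brho,\E_q\bX-\btau\>-\mu(\sum_\bx q(\bx)-1)$ yields the form $q(\bx)\propto h(\bx)e^{-\<\brho,\bx\>}$, i.e., $q=p_{-\brho}$; choosing $\brho=-\btheta_*(\btau)$ enforces $\E_q\bX=\btau$ and the constraint qualification holds because $\btau$ lies in the relative interior of the moment polytope. Substituting $q=p_{\btheta_*(\btau)}$ into $H(q)+\E_q\log h$ and using $\log p_\btheta(\bx)=\log h(\bx)+\<\btheta,\bx\>-A(\btheta)$ collapses the objective to $A(\btheta_*(\btau))-\<\btau,\btheta_*(\btau)\>=F(\btau)$, completing the identification.
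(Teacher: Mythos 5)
Your proposal is correct, but it reaches the two nontrivial facts by a genuinely different route than the paper. For {\sf Fact4}, the paper extends $\btau_*$ by continuity to $\creals^p$, gets injectivity from the mean value theorem plus {\sf Fact3}, and proves surjectivity by induction on the dimension, showing the extended map covers the boundary of $[0,1]^p$ and concluding by continuity; you instead fix $\btau\in(0,1)^p$ and minimize the strictly convex function $A(\btheta)-\<\btau,\btheta\>$, with coercivity coming from the strict inequality $\min_\bx\<\bu,\bx\><\<\btau,\bu\><\max_\bx\<\bu,\bx\>$ for every direction $\bu\neq 0$. This is the standard ``steepness'' argument from exponential-family/convex-duality theory; it is more self-contained, and it buys you {\sf Fact5} essentially for free, since the minimizer is $\btheta_*(\btau)$ and the minimum value is $F(\btau)$ (the paper simply declares {\sf Fact5} a standard Legendre--Fenchel exercise). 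Two small points you should spell out when writing it up: coercivity requires uniformity over directions (either use compactness of the unit sphere together with continuity of the gap $\max_\bx\<\bu,\bx\>-\<\btau,\bu\>$, or invoke the fact that a convex function tending to $+\infty$ along every ray has bounded sublevel sets), and in {\sf Fact6} you should note explicitly that the program is concave with affine constraints, so the interior stationary point $q=p_{\btheta_*(\btau)}$ you exhibit is indeed the global maximizer; the paper instead quotes the Gibbs variational principle and matches it against Eq.~(\ref{eq:InverseTransform}), whereas your Lagrangian computation amounts to re-proving that principle under the moment constraint. Your treatments of {\sf Fact1}--{\sf Fact3} match the paper in spirit (the paper gives an explicit two-configuration lower bound $\min_\bx p_{\btheta}(\bx)^2$ for {\sf Fact3}, while you take $c(\btheta)$ to be the smallest eigenvalue and argue nondegeneracy via a pair of configurations differing in one coordinate; both are fine).
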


We are now in position to prove  Theorem \ref{thm:Main}: we present
here a version of the proof that omits some technical step, and
complete
the details in Appendix \ref{app:Details}.
\begin{proof}[Proof of Theorem \ref{thm:Main}.]
By Eq.~(\ref{eq:InverseTransform}) we have, letting $D_p(\delta)\equiv [\delta,1-\delta]^p$,
\begin{align}
\log Z(\bzero) &= \max_{\btau\in (0,1)^p} F(\btau) \\
&= \max_{\btau\in D_p(\delta)} F(\btau)  = F(\btau_*(\bzero))\, ,
\end{align}
where the second equality follows from assumption {\sf C1}.

The problem of computing $Z(\bzero)$ is then reduced to the one of
maximizing the concave function $F(\btau)$ over the convex set
$D_p(\delta)$. Note that, by assumption {\sf C2} and
Eq.~(\ref{eq:InverseHessian}), the gradient of $F(\btau)$ has Lipchitz
modulus bounded by $L(p)$ on $D_p(\delta)$. Namely, for all
$\btau_1,\btau_2\in D_p(\delta)$, 
\begin{align}
\big\|\nabla_{\btau}F(\btau_1)-\nabla_{\btau}F(\btau_2)\big\|_2 \le
L(p) \, \|\btau_1-\btau_2\|_2\, .
\end{align}

We will maximize $F(\btau)$ by a standard projected gradient algorithm. We
will work here under the assumption that we have access to an oracle
that given a point $\btau\in D_p(\delta)$, returns $\btheta_*(\btau) =
-\nabla_{\btau}F(\btau)$.
While in reality we do not have access to such an oracle, Definition
\ref{def:Estimator} (and the Theorems assumptions) imply that there
exists an efficient approximation scheme for this oracle. 
We will show in Appendix \ref{app:Details} that indeed we can replace
the oracle by such an approximation scheme.

Given the oracle, the projected gradient algorithm is defined by the
iteration
(with the superscript $t$ indicating the iteration number, and letting
$L=L(p)$)
\begin{align}
\btau^0 & = \Big(\frac{1}{2},\frac{1}{2},\dots
,\frac{1}{2}\Big)^{\sT}\, ,\\
\btau^{t+1} & =
\proj_{\delta}\Big(\btau^t-\frac{1}{L}\btheta_*(\btau^t)\Big)\, . \label{eq:ProjGrad}
\end{align}
Here $\proj_{\delta}(\bx)$ is the  orthogonal projector on
$D_p(\delta)$, which can be computed efficiently since, for each
$i\in\{1,2,\dots,p\}$,
and $\bu\in\reals^p$,
\begin{align}
\proj_{\delta}(\bu)_i = 
\begin{cases}
\delta & \mbox{ if } u_i<\delta\, ,\\
u_i & \mbox{ if } u_i\in [\delta,1-\delta]\, ,\\
1-\delta & \mbox{ if } u_i>1-\delta\, .
\end{cases}
\end{align}
We run $t_0 = t_0(p,\eps) \equiv \lceil 2p\,L(p)/\eps\rceil$ iterations of projected gradient. By
\cite[Theorem 3.1]{beck2009fast} we have
\begin{align}
0\le F(\btau_*(\bzero))-F(\btau^{t_0})\le
\frac{L\|\btau^0-\btau_*(\bzero)\|_2^2}{2t_0}\le \frac{\eps}{4}\, .\label{eq:ApproxMax}
\end{align}
Hence, $F(\btau^{t_0})$ provides  a good approximation of $F(\btau_*(\bzero)) = \log Z(\bzero)$. 

We are left with the task of evaluating $F(\btau^{t_0})$. The idea is
to `integrate' the derivative of $F(\btau)$ along a path that starts
at a point $\btau^{(0)}$ where $F$ can be easily approximated.
We will use again $\nabla_{\btau} F(\btau) = -\btheta_*(\btau)$ and
assume that $\btheta_*$ is exactly computed by an oracle.
Again we will see in Appendix \ref{app:Details} that this oracle can
be replaced by the estimator in Definition \ref{def:Estimator}
 
Let $\btau^{(0)} \equiv (\delta,\dots,\delta)^{\sT}$.
Next consider Eq.~(\ref{eq:RestrictedGibbs}). For any $q\in\cP_p$ such
that  $\E_q(\bX)=\btau^{(0)}$, we have, letting
$\entro(x) \equiv -x\log x-(1-x)\log (1-x)$
\begin{align}
&0\le H(q) \le p \, \entro(\delta)\, ,\label{eq:EntroBounds}\\
&\Big|\E_p\log \frac{h(\bX)}{h(\bzero)}\Big|\le K(p) \,
\prob_q(\bX\neq\bzero)\le K(p)p\delta\, . \label{eq:EnergyBounds}
\end{align}
Where the second inequality  in Eq.~(\ref{eq:EntroBounds}) follows
since the joint entropy is not larger than the sum of the entropy of
the marginals. The first inequality in Eq.~(\ref{eq:EnergyBounds})
follows from assumption {\sf C3}, and the last inequality in Eq.~(\ref{eq:EnergyBounds})
is Markov's.
Using these bounds in Eq.~(\ref{eq:RestrictedGibbs}), we get
\begin{align}
\Big|F(\btau^{(0)})-\log h(\bzero)\Big|\le p\, \entro(\delta)+pK(p)\,
\delta\le \frac{\eps}{4}\, ,\label{eq:BoundStart}
\end{align}
where the last inequality follows from the assumption $\eps>4p\delta ( K(p)+\log(4/\delta))$.

For an integer $m$, we let $\btau^{(m)} = \btau^{t_0}$ be the output
of projected gradient, and, for $\ell\in\{1,\dots,m-1\}$,
$\btau^{(\ell)} \equiv \btau^{(0)} + \ell\,
(\btau^{(m)}-\btau^{(0)})/m$ be given by linearly interpolating
between $\btau^{(0)}$ and $\btau^{(m)}$. Note that,
since $\btau^{(0},\btau^{(m)}\in D_p(\delta)$, we have
$\btau^{(\ell)}\in D_p(\delta)$ as well, by convexity.

 We finally construct our
approximation $\hZ$ by letting 
\begin{align}
\log \hZ^{{\rm or}} \equiv \log h(\bzero) -
\sum_{\ell=1}^{m}\<\btheta_*(\btau^{(\ell)}),\btau^{(\ell)}-\btau^{(\ell-1)}\>\,. \label{eq:Zestimate}
\end{align}
(We introduced the superscript ${\rm or}$ to emphasize that this
approximation makes use of the oracle $\btheta_*$. In Appendix
\ref{app:Details} we control the additional error induced by the use
of $\bhtheta$.)
Let us bound the approximation error:
\begin{align}
\Big|\log \frac{\hZ^{{\rm or}}}{Z(\bzero)}\Big|
\le &\big|F(\btau^{(0)})-\log h(\bzero)\big|\label{eq:BoundSum}\\
&+ \sum_{\ell=1}^{m}\big|F(\btau^{(\ell)})-F(\btau^{(\ell-1)})-
\<\nabla_{\btau}F(\btau^{(\ell)}),\btau^{(\ell)}-\btau^{(\ell-1)}\>\big|\nonumber\\
&+\big|F(\btau_*(\bzero)) - F(\btau^{(m)}) \big|\, .\nonumber
\end{align}
The first term is bounded by Eq.~(\ref{eq:BoundStart}) and the last by 
Eq.~(\ref{eq:ApproxMax}). As for the middle term, by the intermediate
value theorem there exists, for each $\ell$, a point
$\tbtau^{(\ell)}\in [\btau^{(\ell-1)},\btau^{(\ell)}]$ such that
\begin{align}
\big|F(\btau^{(\ell)})-F(\btau^{(\ell-1)})-
\<\nabla_{\btau}F(\btau^{(\ell)}),\btau^{(\ell)}-\btau^{(\ell-1)}\>\big|
&= \big|
\<\nabla_{\btau}F(\tbtau^{(\ell)})-\nabla_{\btau}F(\btau^{(\ell)}) ,\btau^{(\ell)}-\btau^{(\ell-1)}\>\big|\nonumber\\
\le L(p) \, \|\btau^{(\ell)}-\btau^{(\ell-1)}\|_2^2&\le
\frac{L(p)}{m^2}\, \|\btau^{(m)}-\btau^{(0)}\|_2^2 \le
\frac{L(p)p}{m^2}\, .
\end{align}
Hence, choosing $m= m_0(p,\eps) \equiv \lceil 4L(p)p/\eps\rceil$, we
can bound the sum in Eq.~(\ref{eq:BoundSum}) by $\eps/4$. Hence the
approximation error is bounded by
\begin{align}
\Big|\log \frac{\hZ^{{\rm or}}}{Z(\bzero)}\Big| \le
\frac{\eps}{4}+\frac{\eps}{4}+\frac{\eps}{4}\le \frac{3\eps}{4}\, ,
\end{align}
which concludes our proof outline.
\end{proof}

\section{An example}
\label{sec:Example}

In this section we describe a simple class of functions 
$\cH_p$ for which it is impossible to estimate $Z(\bzero)$ in
polynomial time unless
$\RP=\NP$. $\RP\neq\NP$ is a standard computational assumption
analogous to $\P\neq\NP$. Informally, $\NP$ is the set of decision problems
whose solution can be checked in polynomial time. On the other hand,
$\RP$ is the set of problems that can be solved in polynomial time
using a randomized algorithm (roughly, an algorithm that has access to
an ideal random numbers generator). The assumption $\RP\neq\NP$ posits
that  there are problems that cannot be solved in polynomial time even
using randomization. As such, it is a slightly stronger than $\P\neq
\NP$ but nearly as widely believed to be true. We refer to standard
textbooks, e.g.  \cite{arora2009computational}, for further details.

Using our  Theorem \ref{thm:Main}, we will show that
consistent parameter estimation using sufficient statistics is
intractable for the models
in the class $\cH_p$.
We will then show that --for the same models-- it is quite easy to
estimate the parameters from i.i.d. samples $\bX_1,\dots,\bX_n\sim p_{\btheta}$.

\subsection{Intractability of estimation from sufficient statistics}
\label{sec:AntiFerro}

For $\beta>0$ a \emph{fixed known number}, and $G = (V=[p], E)$ a simple graph, we
let $h_{G,\beta}:\{0,1\}^p\to \reals_{>0}$ be defined by
\begin{align}
h_{G,\beta}(\bx)\equiv \exp\Big\{2\beta\sum_{(i,j)\in E} \ind(x_i\neq
x_j)\Big\}\, . \label{eq:AntiFerro}
\end{align}
This is also known as the \emph{anti-ferromagnetic Ising model} on
graph $G$. (The same model was already introduced in Section
\ref{sec:SomeExamples},
with slightly different normalizations.)

Fixing $k\in \naturals$, and $\beta\in\reals_{>0}$, we introduce the
class of functions 
\begin{align}
\cH_p(k,\beta) \equiv \big\{ h_{G,\beta} : \, G \mbox{ is a simple
  regular graph of
  degree $k$ over $p$ vertices }\big\}\, .
\end{align}
(Recall that a regular graph is a graph with the
same degree at all vertices. The set of regular graphs is non-empty as
soon as $pk$ is even, and $p\ge p_0(k)$.)
Intractability of approximating $Z(\bzero)$ was characterized in
\cite{sly2012computational,galanis2012inapproximability}. We restate the main result of \cite{sly2012computational}, adapting
it to the present setting\footnote{The present statement differs from
  the original one in \cite{sly2012computational} in one technical
  aspect. We state that $Z(\bzero)$ cannot be approximated
  within a ratio $e^{\eps_0p}$, while \cite{sly2012computational}
  state their result in terms of FPRAS. However, the authors of
  \cite{sly2012computational} also mention that their same proof
  yields impossibility to achieve the approximation
  ratio stated here (which is in fact easy to check).}.
\begin{theorem}[Sly, Sun, 2012]\label{thm:SlySun}
For any $k\ge 3$ and $\beta> \atanh(1/(k-1))$ there exists $\eps_0 =\eps_0(k,\beta)>0$ such
that the following holds. Unless $\RP=\NP$, there is no polynomial
algorithm taking as input $h\in\cH_{p}(k,\beta)$ and returning $\hZ$
such that 
\begin{align}
e^{-\eps_0p}Z(\bzero)\le \hZ\le e^{\eps_0p}Z(\bzero)
\end{align} 
where we recall that $Z(\bzero) \equiv\sum_{\bx\in\{0,1\}^p} h(\bx)$.
\end{theorem}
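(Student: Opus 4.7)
The plan is to establish the hardness via the now-standard ``phase transition implies inapproximability'' paradigm pioneered by Sly, since the threshold $\atanh(1/(k-1))$ is precisely the uniqueness threshold of the anti-ferromagnetic Ising measure on the infinite $k$-regular tree $\T_k$. The key conceptual input is that, above this threshold, the Gibbs measure on $\T_k$ admits two distinct extremal semi-translation-invariant states, corresponding to the two sides of the natural bipartition of $\T_k$; these will serve as the ``$0$'' and ``$1$'' phases encoded by a reduction.

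First I would analyze a uniformly random bipartite $k$-regular graph $\G=(U\sqcup V, E)$ with $|U|=|V|=p/2$. The core technical step is a second moment computation showing that, for $\beta$ strictly above the tree threshold, the partition function $Z(\G)$ is sharply concentrated around its expectation, with almost all weight carried by configurations strongly biased toward a ``$+$'' phase on $U$ and ``$-$'' phase on $V$, or the mirror image. This corresponds to a small-subgraph conditioning or cluster-expansion calculation on the configuration model, and is where the condition $\beta>\atanh(1/(k-1))$ enters: it is exactly the regime in which the dominant stationary point of the tree recursion becomes asymmetric, so the two phases are separated from each other and from any mixed configuration by an exponential free-energy barrier.

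Second, I would translate this structural knowledge into a hardness reduction. Starting from an NP-hard problem, the natural candidate is MAX-CUT on $k$-regular or almost-$k$-regular graphs, which is known to be NP-hard to approximate within a constant factor. The reduction replaces each edge of the MAX-CUT instance by a random bipartite gadget with a few dangling half-edges on each side; by the phase analysis, the two bulk phases of each gadget act as a binary variable, and the gadget contributes a multiplicative factor to $Z$ that differs by an exponentially large (in gadget size) amount depending on whether the two endpoints choose the same or opposite phase. Summing over the whole construction yields a graph $H\in\cH_{p}(k,\beta)$ whose partition function is, to within a factor $e^{o(p)}$, determined by the maximum cut of the original instance. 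Hence any $e^{\eps_0 p}$-approximation of $Z(\bzero)$ with $\eps_0$ small enough (below the per-gadget phase gap) would distinguish YES from NO instances of MAX-CUT, contradicting $\RP\neq\NP$.

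The main obstacle is the second step of the first part: proving that the second moment of $Z(\G)$ matches the square of the first moment up to a subexponential factor. The ratio $\E[Z^2]/(\E[Z])^2$ is controlled by a variational problem over pair overlaps between two independent configurations, and one must show that the optimum is achieved only at overlaps corresponding to ``same phase'' or ``opposite phase,'' with no spurious interior maximizer. Carrying this out requires a delicate stability analysis of the tree recursion at the asymmetric fixed points, together with a careful handling of short cycles via small-subgraph conditioning; this is the technical heart of the Sly--Sun argument and the reason the construction is restricted to the random bipartite $k$-regular ensemble rather than arbitrary $k$-regular graphs.
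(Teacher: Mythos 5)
First, a point about the benchmark: the paper does not prove Theorem~\ref{thm:SlySun} at all. It is imported (up to normalization, and a footnote explaining that the $e^{\eps_0 p}$-ratio form follows from the same proof as the no-FPRAS form) from Sly and Sun \cite{sly2012computational}. So you should be compared against the published Sly--Sun argument, and judged that way your proposal is a road map rather than a proof. You do identify the strategy correctly: $\beta>\atanh(1/(k-1))$ is the tree non-uniqueness threshold, the gadgets are random bipartite $k$-regular graphs whose Gibbs measure splits into two semi-translation-invariant phases, the phase structure is established by a first/second moment comparison refined by small subgraph conditioning, the reduction is from approximating MAX-CUT on bounded-degree graphs, and the randomness of the gadget construction is what makes the conclusion $\RP\neq\NP$ rather than $\P\neq\NP$. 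But the technical heart --- showing that the overlap variational problem controlling $\E[Z^2]/(\E Z)^2$ has no spurious interior maximizer, and converting the expectation-level statement into a with-high-probability statement about a typical gadget --- is only named, not carried out; as you acknowledge, this is precisely the content of \cite{sly2012computational}, so nothing in your text would let a reader reconstruct the theorem.

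Second, the reduction step as written is garbled in ways that matter for the specific $e^{\eps_0 p}$ statement. In the Sly/Sly--Sun construction the gadgets replace the \emph{vertices} of the MAX-CUT instance (their phases are the binary variables), and each instance edge is implemented by joining prescribed dangling half-edges of the two incident gadgets; your version attaches a gadget to each \emph{edge} and speaks of ``the two endpoints choosing a phase,'' which leaves the binary variables undefined. Moreover the phase-dependent gain is a bounded factor per connecting edge, not automatically ``exponentially large in gadget size'': to obtain a YES/NO gap of order $e^{\Omega(p)}$ in the total number of vertices (as opposed to merely ruling out an FPRAS) one must take the number of connecting edges per instance edge proportional to the gadget size, and then verify that this gain dominates both the entropy of phase assignments and the $e^{o(\cdot)}$ errors of the phase analysis, while keeping the final graph simple and exactly $k$-regular so that it lies in $\cH_p(k,\beta)$. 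That quantitative bookkeeping is where $\eps_0(k,\beta)$ comes from, and it is absent from your sketch. In short: an accurate outline of the cited theorem's proof strategy, but not a proof, and with the reduction's structure mis-stated at the level of detail the statement actually requires.
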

We note in passing that \cite{sinclair2014approximation} implies that
this result is tight. Namely, for $\beta<\atanh(1/(k-1))$ there exists
a fully polynomial approximation scheme for $Z(\bzero)$. 

We can now state (and prove) and more concrete form of Corollary
\ref{coro:Intractable}.
In Section \ref{sec:EstimationFromSamples} we will show that the model
parameters can be consistently estimated in polynomial time from 
i.i.d. samples $\bX_1,\dots,\bX_n\sim p_{\btheta}$ .
\begin{corollary}
Fix $k\ge 3$ and $\beta> \atanh(1/(k-1))$. Unless $\RP=\NP$, then there
exists no polynomial consistent estimator from sufficient statistics for the model
$\{\cH_p(k,\beta)\}_{p\ge p_0(k)}$.
\end{corollary}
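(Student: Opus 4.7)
The plan is a proof by contradiction combining Theorem \ref{thm:Main} with Theorem \ref{thm:SlySun}. Assume for contradiction that a polynomial consistent estimator from sufficient statistics exists for $\{\cH_p(k,\beta)\}_{p\ge p_0(k)}$. I will verify the hypotheses of Theorem \ref{thm:Main} for this class (with $\delta$ that I allow to depend on $p$), invoke that theorem to obtain a polynomial-time approximation of $Z(\bzero)$, and choose $\delta$ so that the resulting approximation ratio $e^\eps$ is tighter than the $e^{\eps_0 p}$ threshold forbidden by Theorem \ref{thm:SlySun}.

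For condition C1, note that $h_{G,\beta}(\bx) = h_{G,\beta}(1-x_1,\ldots,1-x_p)$, so $p_{\bzero}$ is invariant under the global bit-flip; hence $\tau_{*,i}(\bzero)=1/2$ for every $i$, and any $\delta\in(0,1/2)$ is admissible (including a $\delta$ that shrinks polynomially in $p$). For condition C3, $\log h_{G,\beta}(\bx)\in[0,2\beta|E(G)|]$ and $|E(G)|=pk/2$, yielding $K(p)=\beta pk$, which is polynomial in $p$.

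The main technical obstacle is condition C2: exhibiting a polynomial $L(p)$ with $\Cov_{\btheta}(\bX)\succeq L(p)^{-1}\id$ for every $\btheta$ whose moment lies in $[\delta,1-\delta]^p$. My plan is (i) combine an exponential-family identity with C3 to bound $\|\btheta\|_\infty\le K(p)+\log(1/\delta)$ on this region; (ii) lower-bound the single-site conditional variances $\Var_{\btheta}(X_i\mid\bX_{-i})$ via the explicit logistic form of the anti-ferromagnetic conditional; and (iii) pass from (ii) to a spectral bound on $\Cov_{\btheta}(\bX)$. Step (iii) is the hard part, since single-site conditional variances do not control $\lambda_{\min}(\Cov_{\btheta})$ in general, and in the regime $\beta>\atanh(1/(k-1))$ certain graphs (e.g.\ bipartite ones) support nearly rank-deficient covariances. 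I expect that this forces a restriction to a Sly--Sun-hard subfamily (such as non-bipartite or locally treelike regular graphs, which are the hard instances underlying Theorem \ref{thm:SlySun}) on which the spectral bound holds; since estimator-hardness on a subfamily implies the same on the larger class, the restriction is harmless.

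With C1--C3 verified, Theorem \ref{thm:Main} produces, for every $\eps>4p\delta(K(p)+\log(4/\delta))$, a polynomial-time algorithm returning $\hZ$ with $e^{-\eps}Z(\bzero)\le\hZ\le e^{\eps}Z(\bzero)$. Since $K(p)=\beta pk$ is linear in $p$, choosing $\delta=\eps_0/(16\beta kp)$ reduces the lower bound on admissible $\eps$ to $(\eps_0/2)p+O(\log p)$, strictly below $\eps_0 p$ for all large $p$. The resulting algorithm approximates $Z(\bzero)$ within a ratio forbidden by Theorem \ref{thm:SlySun}, forcing $\RP=\NP$ and establishing the corollary.
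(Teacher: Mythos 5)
Your overall architecture (verify {\sf C1}--{\sf C3}, invoke Theorem \ref{thm:Main} with $\delta$ shrinking polynomially in $p$, contradict Theorem \ref{thm:SlySun}) is exactly the paper's, and your {\sf C1} check (global flip symmetry of $h_{G,\beta}$ gives $\tau_{*,i}(\bzero)=1/2$) and {\sf C3} check ($K(p)=\beta kp$) are correct, as is the $\delta,\eps$ bookkeeping at the end. But there is a genuine gap exactly where you flag it: condition {\sf C2} is never established. Your step (iii) is not merely ``the hard part'' to be deferred --- it is the entire content of the verification --- and your proposed escape (restricting to non-bipartite or locally treelike regular graphs, on the belief that bipartite instances have nearly rank-deficient covariance) is both unproven and based on a false premise. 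For fixed $\beta$ and $k$, even on bipartite $k$-regular graphs the conditional law of $X_i$ given $\bX_{\di}$ is a logistic function of $\theta_i$ and the neighbor counts, so $\Var_{\btheta}(X_i\mid\bX_{\di}=\bx_{\di})\ge 1/L_0(p)$ uniformly once $\|\btheta\|_{\infty}=O(\log p)$; near rank-deficiency of $\Cov_{\btheta}(\bX)$ only appears in the $\beta\to\infty$ limit. The paper closes step (iii) with a short argument you are missing: for a unit vector $\bv$, greedily pick coordinates of largest magnitude while discarding neighbors of previously picked coordinates, producing an independent set $S$ with $\sum_{i\in S}v_i^2\ge \|\bv\|_2^2/(k+1)$; conditionally on $\bX_{S^c}$ the variables $(X_i)_{i\in S}$ are independent, so by total variance $\Var_{\btheta}(\<\bv,\bX\>)\ge \sum_{i\in S}v_i^2\,\Var_{\btheta}(X_i\mid\bX_{S^c})\ge \|\bv\|_2^2/((k+1)L_0(p))$. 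This gives {\sf C2} with $L(p)=(k+1)L_0(p)$ for \emph{every} $k$-regular graph, so no restriction of the family (and hence no need to check which instances the Sly--Sun reduction actually uses) is required.

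A secondary quantitative problem: your step (i) bound $\|\btheta\|_{\infty}\le K(p)+\log(1/\delta)$ is of order $p$ (since $K(p)=\beta kp$), and feeding that into the logistic conditional in step (ii) only yields conditional variances of order $e^{-cp}$, i.e.\ an exponentially large $L_0$, which violates the requirement that $L(p)$ be polynomial and would make the reduction of Theorem \ref{thm:Main} run in exponential time. You need the sharper, local bound the paper uses: by the Markov property, $1-\delta\ge \prob_{\btheta}(X_i=1\mid\bX_{\di}=\bzero)=e^{\theta_i}/(e^{\theta_i}+e^{2\beta k})$, hence $\theta_i\le 2\beta k+\log((1-\delta)/\delta)=O(\log p)$ for your polynomially small $\delta$ (and symmetrically for the lower bound), after which the single-site conditional variances are indeed bounded below by $1/\mathrm{poly}(p)$.
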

\begin{proof}
The proof consists in checking that the assumptions {\sf C1}, {\sf
  C2}, {\sf C3} of Theorem \ref{thm:Main} apply. It then follows from
Theorem \ref{thm:SlySun} that either $\RP=\NP$ or there is no polynomial
consistent estimator from sufficient statistics. Throughout, we will
write $C$ or $c$ for generic strictly positive constants that can depend on $\beta$, $k$.

\vspace{0.25cm}

Let us start with condition  {\sf C1}. We fix a graph $G = (V=[p],E)$ on $p$
vertices. For a vertex $i\in [p]$, we let $\di = \{j\in [p]: (i,j)\in
E\}$ be the set of neighbors of $i$.
 Writing $\prob_{\bzero}$ for
$\prob_{\btheta=\bzero}$, we have $\tau_{*,i}(\bzero) =\prob_{\bzero}(X_i=1)$.

Note that $p_{\btheta}$ is a Markov Random Field with graph $G$. We
therefore have
\begin{align}
\min_{\bx_{\di}\in\{0,1\}^{\di}}\prob_{\bzero}(X_i=1|\bX_{\di} =\bx_{\di})\le
\prob_{\bzero}(X_i =1) \le
\max_{\bx_{\di}\in\{0,1\}^{\di}}\prob_{\bzero}(X_i=1|\bX_{\di}
=\bx_{\di})\, .\label{eq:ConditionalBound}
\end{align}
A simple direct calculation with Eq.~(\ref{eq:AntiFerro}) yields
\begin{align}
\prob_{\bzero}(X_i=1|\bX_{\di} =\bx_{\di}) = \frac{e^{2\beta\,
    n_0(\bx_{\di})}}{e^{2\beta\,
    n_0(\bx_{\di})} +e^{2\beta\,
    n_1(\bx_{\di})}}\, ,\label{eq:CondProb}
\end{align}
where $n_0(\bx_\di)$ and $n_0(\bx_\di)$  are the number of zeros
and ones in the vector $\bx_{\di}$. 

The right hand side of Eq.~(\ref{eq:CondProb}) is maximized when $n_0(\bx_\di)= k-n_1(\bx_\di) =
k$, and minimized when $n_0(\bx_\di)= k-n_1(\bx_\di) = 0$. We then
have
\begin{align}
\frac{1}{1+e^{2\beta k}} \le \prob_{\bzero}(X_i=1)\le \frac{e^{2\beta
    k}}{1+e^{2\beta k}} \, .
\end{align}
We conclude that there exists $c = c(k,\beta)>0$ such that condition
{\sf C1} is satisfied for all $\delta\in (0,c)$. We will select the
value of $\delta$  after checking condition {\sf C3}.

\vspace{0.25cm}

Consider condition {\sf C3}. From Eq.~(\ref{eq:AntiFerro}), it follows
that $h_{G,\beta}(\bx)\ge 1$ (because the argument in the exponent is
non-negative) and $h_{G,\beta}(\bx) \le \exp(2\beta|E|)$ with $|E|$
the number of edges (the upper bound is saturated if the graph is
bipartite).
Since $|E| = kp/2$, Eq.~(\ref{eq:ConditionC3}) follows with $K(p) =
\beta\, k p$. 

At this point we choose $\delta = 1/(10pK(p))$. Assuming, without loss
of generality, $K(p), p\ge 10$, this implies that we can take $\eps = 2$
in Eq.~(\ref{eq:ApproxBound}), which yields the desired contradiction
with Theorem \ref{thm:SlySun}, provided we can verify condition {\sf
  C2} with the stated value of $\delta$.

\vspace{0.25cm}
 
To conclude our proof, consider condition {\sf C2}. First we claim
that $\btau_*(\btheta)\in [\delta,1-\delta]^p$ implies
$\|\btheta\|_{\infty}\le C\, \log p$  for some constant $C =
C(k,\beta)$. Indeed, let us fix $i\in [p]$ and prove that $\theta_i\le
C\,\log p$ (the lower bound follows from an analogous argument).
Using again the fact that $p_{\btheta}$ is a Markov Random Field, we
have (since, by definition $\tau_{*,i}(\btheta)=\prob_{\btheta}(X_i = 1)$)
\begin{align}
1-\delta\ge \prob_{\btheta}(X_i = 1)\ge
\min_{\bx_{\di}\in\{0,1\}^{\di}}\prob_{\btheta}(X_i=1|\bX_{\di}
=\bx_{\di}) \, .
\end{align}
Proceeding as in the case of condition {\sf C1}, we see that the last
conditional probability is minimized when all the neighbors of $i$ are 
in state $0$ (i.e. $\bx_{\di} = \bzero$). This yields
\begin{align}
1-\delta\ge \frac{e^{\theta_i}}{e^{\theta_i}+e^{2\beta k}}, 
\end{align}
which is equivalent to $\theta_i\le 2\beta k+\log((1-\delta)/\delta)$.
Substituting $\delta = 1/(10pK(p))$ with $K(p)$ a polynomial yields the
desired bound (recall that $\beta$ and $k$ are constants).

Next, we claim that,  there exists a polynomial $L_0(p)$ such that, for
each $i\in[p]$, all $\bx_{\di}$ and all $\btheta$ with $|\theta_i|\le C\log p$, we have
\begin{align}
\Var_{\btheta}(X_i|\bX_{\di}=\bx_{\di})\ge \frac{1}{L_0(p)}\, .\label{eq:VarI}
\end{align}
The calculation is essentially the same as the one already carried out
above for $\prob_{\btheta}(X_i = 1|\bX_{\di} = \bx_{\di})$ and
therefore we omit it.

Finally, we want to prove Eq.~(\ref{eq:ConditionC2}) for some
polynomial $L(p)$. Equivalently, we need to prove that
$\Var_{\btheta}(\<\bv,\bX\>)\ge 1/L(p)$ for any vector $\bv\in\reals^p$,
$\|\bv\|_2=1$.
Fix one such vector $\bv$. Let $i(1)$ be the index of the component of
$\bv$ with largest magnitude (i.e. $|v_{i(1)}|\ge |v_j|$ for all
$j\in [p]\setminus\{i(1)\}$). Let $i(2)$ be the of the component of
$\bv$ with largest magnitude \emph{excluded $i(1)$ and $\di(1)$}  (i.e. $|v_{i(2)}|\ge |v_j|$ for all
$j\in [p]\setminus\{i(1),\di(1),i(2)\}$), and so on.
Namely, for each $\ell$, we let $i(\ell)$ be the index of the
component of $\bv$ with the largest magnitude \emph{excluded
  $i(1),\dots,i(\ell-1)$ and their neighbors in $G$}.
Denote by $m\ge n/(k+1)$
the total number of vertices selected in this manner, and let $S = \{i(1),\dots,i(m)\}$.
It is immediate to see that
\begin{align}
\sum_{i\in S}v_{i}^2 \ge \frac{\|\bv\|_2^2}{k+1}\, .\label{eq:BoundOneK}
\end{align}
Further, letting $S^c = [p]\setminus S$,
\begin{align}
\Var_{\btheta}(\<\bv,\bX\>)&\ge \Var_{\btheta}(\<\bv,\bX\>|\bX_{S^c})
\\
&=
\sum_{i\in S}\Var_{\btheta}(v_iX_i|\bX_{S^c})  \label{eq:IdentityVar1}\\
&= \sum_{i\in S}v_i^2 \,
\frac{1}{L_0(p)}\ge \frac{\|\bv\|_2^2}{(k+1)L_0(p)}\, . \label{eq:IdentityVar2}
\end{align}
Here the  identity in Eq.~(\ref{eq:IdentityVar1}) follows because the $(X_i)_{i\in S}$ are
conditionally independent given $\bX_{S^c}$
(note that $S$ is an independent set in $G$, i.e. there is no edge
connecting two vertices in $S$), and because $(X_i)_{i\in S^c}$ constant given $\bX_{S^c}$
The expressions in Eq.~(\ref{eq:IdentityVar1}) follow from
Eqs.~(\ref{eq:VarI}) and (\ref{eq:BoundOneK}).

We therefore established also condition {\sf C2}, with $L(p) =
(k+1)L_0(p)$. This finishes the proof. 
\end{proof}

\subsection{Tractable estimation from samples}
\label{sec:EstimationFromSamples}

In this section we assume to be given $n$ i.i.d. samples
$\bX_1,\bX_2,\dots,\bX_n\sim p_{\btheta}$ and denote
by $\bX^{(n)} = (\bX_1,\bX_2,\dots,\bX_n)$ the entire dataset.
We will seek an estimator $\bhtheta=
\bhtheta(\bX^{(n)};\xi)$ that can be computed efficiently, and is
consistent in the sense that,
for a sequence $\xi_n\to 0$
\begin{align}
\bhtheta(\bX^{(n)};\xi_n)\stackrel{p}{\to }\btheta\, ,
\end{align}
in $p_{\btheta}$-probability. 

It is indeed fairly easy to construct such an estimator:
we will use an approach developed in \cite{abbeel2006learning}. Fix $i\in [p]$
and say we want to estimate $\theta_i$. Let $N_i(x_i;\bx_{\di})$ be
number of samples $\ell$ such that 
$X^{(\ell)}_i = x_i$, $\bX^{(\ell)}_\di = \bx_\di$. In formulae
\begin{align}
N_i(x_i;\bx_{\di}) =\#\Big\{\ell\in[n]:\;\; X^{(\ell)}_i = x_i,
\bX^{(\ell)}_\di = \bx_\di\Big\}\, . 
\end{align}
Then by the law of large numbers we have, almost surely and in probability
\begin{align}
\lim_{n\to\infty}\frac{N_i(1;\bzero)}{N_i(0;\bzero)} =
\frac{\prob_{\btheta}(X_i=1, \bX_{\di}=0)}{\prob_{\btheta}(X_i=0, \bX_{\di}=0)}=
\frac{\prob_{\btheta}(X_i=1|\bX_{\di}=0)}{\prob_{\btheta}(X_i=0|
  \bX_{\di}=0)}\, .
\end{align}
Now, an immediate generalization of Eq.~(\ref{eq:CondProb}) yields
\begin{align}
\frac{\prob_{\btheta}(X_i=1|\bX_{\di}=0)}{\prob_{\btheta}(X_i=0|
  \bX_{\di}=0)} = e^{2\beta k+\theta_i}\, .\label{eq:RatioProb}
\end{align}
This immediately suggests the estimator
\begin{align}
\htheta_i \equiv -2\beta k+ \log \frac{N_i(1;\bzero)}{N_i(0;\bzero)}
\, . \label{eq:EstimatorSamples}
\end{align}
This can be obviously evaluated for all vertices $i\in [p]$, in time
linear in $p$ and in $n$.
We next provide a non-asymptotic estimate on the number of samples
necessary to achieve a desired level of accuracy. 
\begin{proposition}
Let $\xi>0$ be a precision parameter and $\Delta$ an error
probability,
and let $\btheta$ be such that $\|\btheta\|_{\infty}\le
\theta_{\max}$, $\theta_{\rm max}\ge 1$. 
Then, letting $\bhtheta$ denote the estimator
(\ref{eq:EstimatorSamples}), we have
\begin{align}
\prob_{\btheta}\Big(\|\bhtheta-\btheta\|_{\infty}\le \xi\Big)\ge
1-\Delta\, ,
\end{align}
provided $n\ge e^{C_*\theta_{\rm max}} \xi^{-2}\log (p/\Delta)$, where $ C_*=
C_*(k,\beta)$ is a constant.
\end{proposition}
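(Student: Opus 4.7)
The plan is to reduce the claim to a uniform multiplicative concentration of empirical joint counts together with a $p$-free lower bound on their expectations. Set $p_{x_i} \equiv \prob_{\btheta}(X_i = x_i, \bX_{\di} = \bzero)$. The factorization $p_{x_i} = \prob_{\btheta}(X_i = x_i \mid \bX_{\di} = \bzero)\, \prob_{\btheta}(\bX_{\di} = \bzero)$ combined with Eq.~(\ref{eq:RatioProb}) gives $p_1/p_0 = e^{2\beta k + \theta_i}$, so the estimator decomposes as
\begin{align}
\htheta_i - \theta_i \;=\; \log\frac{N_i(1;\bzero)/n}{p_1} \;-\; \log\frac{N_i(0;\bzero)/n}{p_0}.
\end{align}
Assuming without loss of generality that $\xi\le 1$, it therefore suffices to show that, with probability at least $1-\Delta$, every ratio $N_i(x_i;\bzero)/(n p_{x_i})$, $i\in[p]$, $x_i\in\{0,1\}$, lies in $[1-\xi/4,\, 1+\xi/4]$, since then $|\log(1\pm\xi/4)|\le \xi/2$.

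The main obstacle is the $p$-independent lower bound
\begin{align}
p_{x_i} \;\ge\; c_0(k,\beta)\, e^{-(k+2)\theta_{\max}}.
\end{align}
I would prove it via the Markov property of $p_{\btheta}$. Let $B = \{i\}\cup\di$, so $|B| = k+1$ and the number of edges of $G$ incident to $B$ is at most $k(k+1)$. Conditional on any $\bx' \in \{0,1\}^{V \setminus B}$, the law of $\bX_B$ is proportional to $\exp(\langle \btheta_B, \bx_B\rangle + 2\beta\, \Psi(\bx_B; \bx'))$ for some nonnegative $\Psi$ supported on edges incident to $B$. The unnormalized weight at $\bx_B = (x_i, \bzero)$ is at least $e^{-\theta_{\max}}$ (since $\Psi \ge 0$ and $\theta_i x_i \ge -\theta_{\max}$), while the normalizing constant is at most $2^{k+1}\, e^{(k+1)\theta_{\max}}\, e^{2\beta k(k+1)}$. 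Taking the ratio of these two quantities yields a $p$-independent lower bound on the conditional probability, which survives averaging over $\bX_{V \setminus B}$.

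Given the lower bound, each $N_i(x_i;\bzero) \sim \mathrm{Bin}(n, p_{x_i})$, and the standard multiplicative Chernoff inequality gives
\begin{align}
\prob\Big(\big|N_i(x_i;\bzero)/(n p_{x_i}) - 1\big| \ge \xi/4\Big) \;\le\; 2\exp\bigl(-n\, p_{x_i}\, \xi^2/48\bigr).
\end{align}
A union bound over the $2p$ pairs $(i, x_i)$ together with the lower bound on $p_{x_i}$ shows that
\begin{align}
n \;\ge\; \frac{48}{c_0(k,\beta)}\, e^{(k+2)\theta_{\max}}\, \xi^{-2}\, \log(4p/\Delta)
\end{align}
is sufficient, which, after absorbing constants into $\theta_{\max}\ge 1$, matches the statement with $C_* = C_*(k,\beta)$. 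On the event that all $2p$ empirical ratios are within $\xi/4$ of $1$, the decomposition above gives $|\htheta_i - \theta_i| \le \xi$ for every $i$, hence $\|\bhtheta - \btheta\|_\infty \le \xi$, completing the proof.
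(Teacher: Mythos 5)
Your proof is correct and takes essentially the same route as the paper's: multiplicative concentration of the binomial counts $N_i(x;\bzero)$ about their means, a $p$-free lower bound on $\prob_{\btheta}(X_i=x,\bX_{\di}=\bzero)$ obtained from the Markov property, and a union bound over vertices, with the ratio identity (\ref{eq:RatioProb}) converting count accuracy into parameter accuracy. The only differences are cosmetic: you work additively with the two log-ratios instead of with the relative error of $e^{\htheta_i}$, you spell out the ``explicit calculation'' behind the lower bound (which the paper omits) by conditioning on $\bX_{V\setminus B}$ for $B=\{i\}\cup\di$, and your restriction to $\xi\le 1$ parallels the paper's own implicit restriction to small $\eps$.
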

\begin{proof}
Fix $i\in [p]$, and let, for $x\in\{0,1\}$,
\begin{align}
q_x\equiv \prob_{\btheta}(X_i = x; \bX_{\di} = \bzero)\, .
\end{align}
Letting $D\subseteq [p]$, $|D|\le k(k-1)$, be the set of neighbors of
$\{i\}\cup\di$ in $G$, we have, by the Markov property
\begin{align}
\min_{\bx_D\in\{0,1\}^D}\prob_{\btheta}\big(X_i = x; \bX_{\di} =
\bzero\big|\bX_{D}=\bx_D\big)\le  
q_x\le \max_{\bx_D\in\{0,1\}^D}\prob_{\btheta}\big(X_i = x; \bX_{\di} = \bzero\big|\bX_{D}=\bx_D\big)\, .
\end{align}
An explicit calculation shows that there exists a constant $C =
C(k,\beta)$ such that
\begin{align}
e^{-C\theta_{\max}}\le q_x \le 1-e^{-C\theta_{\max}}\, . \label{eq:BoundProbQ}
\end{align}
Note that, for $x\in\{0,1\}$, $N_{i}(x;\bzero)$ is a binomial random
variable ${\rm Binom}(n,q_x)$. Standard tail bounds on binomial
random variables yield (for $\eps\le 1/2$)
\begin{align}
\prob_{\btheta}\Big(\big|N_i(x;\bzero)-\E N_i(x;\bzero)\big|\ge \eps
\E N_i(x;\bzero)\Big)\le
2\,\exp\Big\{-\frac{n\eps^2}{8\min(q_x,1-q_x)}\Big\}\, .
\end{align}
Using Eq.~(\ref{eq:RatioProb}) and the definition
(\ref{eq:EstimatorSamples}), together with Eq.~(\ref{eq:BoundProbQ}), we get 
\begin{align}
\prob_{\btheta}\Big(\big|e^{\htheta_i}-e^{\theta_i}\big|\ge \eps e^{\theta_i}\Big)\le
2\,\exp\Big\{-ne^{C\theta_{\rm max}}\eps^2\Big\}\, .
\end{align}
Passing to logarithms and using $|\log(1+x)|\le C|x|$ for all $x\le
1/2$, this yields, for all $\eps<1/4$, and eventually a different
constant $C$,
\begin{align}
\prob_{\btheta}\Big(\big|\htheta_i-\theta_i\big|\ge \xi \Big)\le
2\,\exp\Big\{-ne^{C\theta_{\rm max}}\xi^2\Big\}\, .
\end{align}
The proof is completed by choosing $n$ so that the right hand side is
upper bounded by $\Delta/p$ and using the union bound over the
vertices $p$.
\end{proof}

\section{Discussion and related literature}
\label{sec:Discussion}

From a mathematical point of view, the phenomenon highlighted by
Theorem \ref{thm:Main} is not new. It can be traced back to two
well-understood facts, and one simple remark:
\begin{enumerate}
\item First well-understood fact. 
The log partition function is the value of a convex optimization
  problem:
\begin{align}
\log Z(\bzero) = \max\Big\{\, F(\btau)\, :\;\;
\btau\in[0,1]^p\,\Big\}\, . \label{eq:ConvexOptimization}
\end{align}
\item Second well-understood fact. Recall that  a weak evaluation oracle for $F$ is
  an oracle that --on input $\btau,\eps$, with $\btau \in [0,1]^p$ and
  $\eps>0$-- returns $\hF$ such that $|F(\btau)-\hF|\le \eps$.
Given such an oracle, then the optimization problem
(\ref{eq:ConvexOptimization}) can be solved in polynomial time with
accuracy $C(p)\eps$, with $C(p)$ a polynomial. 

(See for instance \cite{lovasz1987algorithmic}, Lemma 2.2.4 and
Theorem 2.2.14 or \cite{grotschel1981ellipsoid}.)
\item Simple remark. We know that $F$ is differentiable with Lipschitz
  continuous gradient for $\btau\in [\delta,1-\delta]^p$, by assumption
  {\sf C2}. Further, we showed quite easily that, 
  letting  $\btau^{(0)} = (\delta,\dots,\delta)^{\sT}$, we have
  $F(\btau^{(0)})\approx F(\bzero) = \log
  h(\bzero)$. It follows that $F(\btau)$ can be --approximately--
  evaluated by `integrating' the gradient $\nabla F$ between
  $\btau^{(0)}$ and $\btau$.

Since we assume to have access to a gradient oracle (i.e. an oracle
for $\nabla F(\btau) = -\btheta_*(\btau)$), we can construct an
evaluation oracle. Hence, by the previous facts we can approximate
$\log Z(\bzero)$.
\end{enumerate}
In other words, we could have proven Theorem \ref{thm:Main} by using
the construction at point 3, and then referring to standard results in
convex optimization \cite{grotschel1981ellipsoid,lovasz1987algorithmic}. We preferred a
self-contained proof, that uses additional structure of the problem
(differentiability of $F$  and existence of an oracle for $\nabla F$).

The specific example discussed in Section \ref{sec:Example} can be regarded
as a graphical model. While we used a specific class of models, namely
antiferromagnetic Ising models  defined by Eq.~(\ref{eq:AntiFerro}),
approximating the partition function of a graphical model is --in many
cases-- intractable \cite{goldberg2003computational,sly2010computational,sly2012computational,galanis2012inapproximability,cai2012inapproximability}. Hence, the conclusion is that --generally
speaking-- \emph{it is intractable to estimate the parameters of a graphical model from
sufficient statistics} (unless the model has  special structure). 

The literature on estimating parameters and structure of a graphical
model is quite vast, and we can only provide a few pointers here. 
Traditional methods \cite{hinton1983analysing,ackley1985learning}
attempt at maximizing the likelihood function by gradient
ascent. These approaches are necessarily based on sufficient
statistics, and hence covered by our Theorem \ref{thm:Main}: in general, they
cannot be implemented in polynomial time. The bottleneck is quite
apparent in this case: evaluating the likelihood function or its
gradient requires to compute expectations with respect to
$p_{\btheta}$ which --in general-- is hard. Notice that  this
difficulty is not circumvented by regularizing the likelihood
function, as in the graphical LASSO \cite{banerjee2008model}.

A possible approach to overcome this problem was proposed in
\cite{ravikumar2010high}: the basic idea is to reconstruct the
neighborhood of a node $i$ by performing a logistic regression against
the other nodes. This approach generalizes to discrete graphical
models a method developed in \cite{MeinshausenBuhlmann} for Gaussian graphical models.
As shown in \cite{bento2009graphical}, the logistic regression method fails unless the
graphical model satisfies a `weak dependence' condition. Under the
same type of condition, the log partition function can be computed
efficiently.

Several papers
\cite{abbeel2006learning,csiszar2006consistent,bresler2008reconstruction,ray2012greedy,anandkumar2012high}
develop algorithm to learn parameters and graph structures, with
consistency guarantees under weak assumptions on the graphical
model. As stressed in Section \ref{sec:EstimationFromSamples}, these
algorithms do not make use exclusively of the sufficient statistics
and instead effectively estimate the joint distributions of subsets of
$k$ variables, with $k$ depending on the maximum degree.

On the impossibility side, Bogdanov, Mossel and Vadhan
\cite{bogdanov2008complexity} showed that estimating graphical models
with hidden nodes is intractable.
Singh and Vishnoi \cite{singh2013entropy} establish a result that is
very similar to Theorem \ref{thm:Main} with a slightly different
oracle assumption. Their proof uses the ellipsoid algorithm instead of
projected gradient, and their motivation is related to maximum entropy
rounding techniques in optimization algorithms.  Roughgarden and
Kearns  \cite{roughgarden2013marginals} establish equivalence of
various computational tasks in graphical models. Again, they use convex
duality and the ellipsoid algorithm.

\vspace{0.25cm}

Finally, I recently became aware of unpublished work by Bresler,
Gamarnik and Shah \cite{bresler2014hardness}, that also proves intractability of learning graphical
models  from sufficient statistics. Their proof strategy is broadly similar
to the one presented here but differs in the  solution of several technical obstacles.

\section*{Acknowledgements}

This work was partially supported by the NSF grant CCF-1319979 and the grants AFOSR/DARPA
FA9550-12-1-0411 and FA9550-13-1-0036. I am grateful to Guy Bresler
for discussing \cite{bresler2014hardness} prior to publication.

\appendix

\section{Simple properties of exponential families}
\label{app:Expo}

In this section we provide a self-contained proof of Proposition
\ref{prop:Expo}. {\sf Fact1} and {\sf Fact2} are immediate and we omit
their proof.

\vspace{0.25cm}

\noindent{\sf Fact3.} For any vector $\bv\in\reals^p$, $\|\bv\|_2=1$, we have
\begin{align} 
\<\bv,\Cov_{\btheta}(\bX)\bv\> = \Var_{\btheta}\Big(\<\bv,\bX\>\Big) =
\frac{1}{2}\, \E_{\btheta}\Big\{\<\bv,\bX-\bX'\>^2\Big\}\, ,
\end{align}
with $\bX$, $\bX'\sim p_{\btheta}$ independent. Letting $\bs=
\sign(\bv)$, and continuing the above chain of inequalities,
we get
\begin{align} 
\<\bv,\Cov_{\btheta}(\bX)\bv\> &\ge 2\cdot\frac{1}{2}p_{\btheta}(\bs) \,
p_{\btheta}(-\bs) \<\bv,\bs-(-\bs)\>^2 \\
&\ge 
p_{\btheta}(\bs) \,
p_{\btheta}(-\bs)\|\bv\|_1^2 \ge \min_{\bx\in
  \{0,1\}^n}p_{\btheta}(\bx)^2 \, ,
\end{align}
which yields the desired claim.

\vspace{0.25cm}

\noindent{\sf Fact4.} It is convenient to extend by continuity $\btau_*:\creals\to[0,1]^p$,
where $\creals \equiv\reals\cup\{+\infty,-\infty\}$ is the extended
real line. It is a simple analysis exercise to check that this 
extension exists and is well defined. Indeed, if
$\btheta\in\creals^{n}$
is such that $\theta_{i} = +\infty$ for all $i\in S_+\subseteq [n]$,
 $\theta_{i} = -\infty$ for all $i\in S_-\subseteq [n]$, and
 $\theta_i\in (-\infty,+\infty)$ for all $i\in S_0\equiv [n]\setminus (S_+\cup
 S_-)$, then we have 
\begin{align}
\btau_*(\btheta) &= \frac{1}{\tZ(\btheta)} \, \sum_{\substack{
\bx\in\{0,1\}^n\,  \text{s.t.}\\
\bx_{S_+} = 1, \bx_{S_-} = 0}}\!\!
\bx\; h(\bx)\; \exp\Big\{\<\btheta_{S_0},\bx_{S_0}\>\Big\}\, ,\label{eq:TauCont}\\
\tZ(\btheta) &\equiv 
\sum_{\substack{
\bx\in\{0,1\}^n\,  \text{s.t.}\\
\bx_{S_+} = 1, \bx_{S_-} = 0}}\!\!
h(\bx)\;
\exp\Big\{\<\btheta_{S_0},\bx_{S_0}\>\Big\}\, ,
\end{align}

We next prove that $\btau_*$ is injective. Indeed, assume by
contradiction that $\btau_*(\btheta_1) = \btau_*(\btheta_2)$ for
$\btheta_1\neq\btheta_2$. Then, by the intermediate value theorem,
there exists $\lambda\in [0,1]$ such that, letting $\btheta_\lambda= \lambda\, \btheta_1 +(1-\lambda)\,\btheta_2$,
\begin{align}
\btau_*(\btheta_1)-\btau_*(\btheta_2) = 
\nabla_{\btheta}\btau_*(\btheta_{\lambda}) (\btheta_1-\btheta_2)\, ,
\end{align}
and hence, letting $\bv = \btheta_1-\btheta_2$,
\begin{align}
0 = \<\bv,
\btau_*(\btheta_1)-\btau_*(\btheta_2)\> = \<\bv,\Cov_{\btheta}(\btheta_{\lambda}) \bv\>\, ,
\end{align}
which is impossible by {\sf Fact 3} above.

In order to prove that $\btau_*:\creals^p\to [0,1]^p$ is surjective,
we will proceed by induction over the problem's dimension $p$. The
claim is obvious for $p=1$. We assume next that it holds for all
dimensions up to $(p-1)$ and prove it for dimension $p$. This claim
follows by continuity, after proving that the image of $\btau_*$
contains the boundary $B_p \equiv [0,1]^p\setminus (0,1)^p$. Namely,
for each $\btau\in B_p$, there exists $\btheta\in\creals^p$ such that
$\btau_*(\btheta) =\btau$. 

To see that this is the case, fix one such $\btau$. Let $S_+ \equiv
\{i\in [n] :\, \tau_i = 1\}$,  $S_- \equiv \{i\in [n] :\, \tau_i =
0\}$, and $S_0 \equiv [p]\setminus (S_+\cup S_-)$ and note that, by assumption
$|S_0|\le n-1$. Take $\btheta$ such that $\theta_i = +\infty$ for all
$i\in S_+$ and $\theta_i = -\infty$ for all
$i\in S_-$. By Eq.~(\ref{eq:TauCont}), we have
$\btau_*(\btheta)_{S_{+}\cup S_{-}} = \btau_{S_+\cup S_-}$ and it is
therefore sufficient to check the values of $\btau_*(\btheta)$ on
$S_0$. 
Let $h_{S_+,S_-}(\bx_{S_0}) = h(\bx)$ where $\bx_{S_+} = 1_{S_+}$ and
$\bx_{S_-} = 0_{S_-}$.
Then, again by Eq.~(\ref{eq:TauCont}), we have
\begin{align}
\btau_*(\btheta)_{S_0} &= \frac{1}{\tZ(\btheta)}\sum_{\bx_{S_0}\in\{0,1\}^{S_0}}\!\!
\bx_{S_0}\; h_{S_+,S_-}(\bx_{S_0})\;
\exp\Big\{\<\btheta_{S_0},\bx_{S_0}\>\Big\}\, ,\\
\tZ(\btheta) & = \sum_{\bx_{S_0}\in\{0,1\}^{S_0}}\!\!
h_{S_+,S_-}(\bx_{S_0})\;
\exp\Big\{\<\btheta_{S_0},\bx_{S_0}\>\Big\}\, .
\end{align}
Comparison with Eq.~(\ref{eq:TauDef}) shows that --by the induction
hypothesis-- there exists $\btheta_{S_0}$ such that
$\btau_*(\btheta)_{S_0} = \btau_{S_0}$.

\vspace{0.25cm}

\noindent{\sf Fact5.} This is a standard exercise with
Legendre-Fenchel transforms and we omit its proof.

\vspace{0.25cm}

\noindent{\sf Fact6.} Recall the Gibbs variational principle \cite{MezardMontanari}
\begin{align}
A(\btheta) = \max_{q\in\cP_p} \Big\{H(q) + \E_q\big(\log
h(\bX)+\<\btheta,\bX\>\big)\Big\}\, .
\end{align}
The claim follows by comparing this expression with Eq.~(\ref{eq:InverseTransform}).

\section{Finishing the proof of Theorem \ref{thm:Main}}
\label{app:Details}

In this appendix we complete the proof of Theorem \ref{thm:Main}.
In the simplified version presented in Section \ref{sec:Main}, we 
assumed to have access to an oracle returning $\btheta_*(\btau)$ when 
queried with value $\btau$. We will show that our claim remains
correct if the oracle is replaced by a polynomial consistent
estimator $\bhtheta(\, \cdot\,, \,\cdot\,)$.
By Definition \ref{def:Estimator}, for any $\xi>0$ we have
\begin{align}
\|\bhtheta(\btau,\xi)-\btheta_*(\btau)\|_2\le \xi\, .
\end{align}

The oracle $\btheta_*(\,\cdot\,)$ is used in two points in the proof of
Theorem \ref{thm:Main}:
\begin{enumerate}
\item In the implementation of the projected gradient algorithm,
  cf. Eq.~(\ref{eq:ProjGrad}). It is queried a total of $t_0(p,\eps)
  \equiv 2p\,L(p)/\eps$ times for this purpose.
\item In calculating the approximation of $Z(\bzero)$, cf.
  Eq.~(\ref{eq:Zestimate}).
It is queried a total of $m_0(p,\eps) \equiv \lceil 4L(p)p/\eps\rceil$
times for this purpose.
\end{enumerate}
We will replace these queries with queries to
$\bhtheta(\,\cdot\,,\xi)$ with $1/\xi$ polynomial in $p$ and $1/\eps$.
Since the total number of calls is also polynomial in $p$ and $1/\xi$, this yields an
algorithm that is polynomial in $p$ and $1/\eps$. 

\vspace{0.25cm}

\noindent{\bf Queries for projected gradient.}
We denote by $\bsigma^0$, $\bsigma^1$,\dots $\bsigma^{t_0}$ the
sequence generated by the projected gradient algorithm, with
$\nabla_{\btau}F(\btau)$ approximated by $-\bhtheta(\btau,\xi)$.
Namely, we have $\bsigma^0 = \btau^0$ and, for all $t\ge 0$,
\begin{align}
\bsigma^{t+1} & =
\proj_{\delta}\Big(\bsigma^t-\frac{1}{L}\bhtheta(\bsigma^t,\xi)\Big)\, . 
\end{align}
Comparing with Eq.~(\ref{eq:ProjGrad}), we have (dropping the argument
$\xi$ of $\bhtheta$ in order to simplify the notation):
\begin{align}
\|\btau^{t+1}-\bsigma^{t+1}\|_2\le & 
\Big\|\Big(\btau^t-\frac{1}{L}\bhtheta(\btau^t)\Big)-\Big(\bsigma^t-\frac{1}{L}\bhtheta(\bsigma^t)\Big)\Big\|_2\\
\le &\Big\|\btau^t+\frac{1}{L}\nabla_{\btau}F(\btau^t)-\bsigma_t-\frac{1}{L}\nabla_{\btau}F(\bsigma^t)\Big\|_2 \\
&+\frac{1}{L}\|\bhtheta(\bsigma_t)
-\btheta_*(\bsigma_t) \|_2 +\frac{1}{L}\|\bhtheta(\btau_t)
-\btheta_*(\btau_t) \|_2 \\
 \le & \Big\|\Big(\id +\frac{1}{L}\nabla^2F(\brho^t)\Big)(\btau^t-\bsigma_t)\|_2 +\frac{2\xi}{L}\, ,
\end{align}
where the last inequality follows from the definition of $\bhtheta$,
and by applying the intermediate value theorem, with $\brho^t$ a point
on the interval $[\btau^t,\bsigma^t]$. Now, by
Eq.~(\ref{eq:HessianCov}), (\ref{eq:InverseHessian}) and assumption
{\sf C2}, we have $-L\preceq \nabla^2F(\brho^t)\preceq \bzero$, and therefore
 $\|\id +L^{-1}\nabla^2F(\brho^t)\|_2\le 1$ (in operator norm). We get therefore 
\begin{align}
\|\btau^{t+1}-\bsigma^{t+1}\|_2\le &\|\btau^{t}-\bsigma^{t}\|_2 +
\frac{2\xi}{L} \, ,
\end{align}
which, of course, implies
\begin{align}
\|\btau^{t_0}-\bsigma^{t_0}\|_2\le &\frac{2t_0\, \xi}{L}\, .
\end{align}

Notice that, for any $\btau\in D_p(\delta)$, 
\begin{align}
\|\nabla_{\btau}F(\btau)\|_2  =
\|\nabla_{\btau}F(\btau)-\nabla_{\btau}F(\btau_*)\|_2  \le L\,
\|\btau-\btau_*\|_2\le L\sqrt{p}\, .
\end{align}
Hence
\begin{align}
\big|F(\btau^{t_0})-F(\bsigma^{t_0}) \big|\le L\sqrt{p} \,
\|\btau^{t_0}-\bsigma^{t_0}\|_2\le 2t_0\,\sqrt{p} \xi\, .
\end{align}
In particular, by choosing $\xi\le \eps/(16t_0(p,\eps)\sqrt{p})$, it
follows that $|F(\btau^{t_0})-F(\bsigma^{t_0})|\le \eps/8$.

\vspace{0.25cm}

\noindent{\bf Queries for evaluating Eq.~(\ref{eq:Zestimate}).}
We let $\bsigma^{(0)}= \btau^{(0)}$, $\bsigma^{(m)} = \sigma^{t_0}$ and
$\bsigma^{(\ell)}$, $\ell\in\{1,\dots,m-1\}$ be given by interpolating
linearly. The approximation $\hZ$ in Eq.~(\ref{eq:Zestimate}) is
replaced by
\begin{align}
\log \hZ = \log h(\bzero) -
\sum_{\ell=1}^{m}\<\btheta_*(\btau^{(\ell)}),\btau^{(\ell)}-\btau^{(\ell-1)}\>\,.
\end{align}
The approximation error is bounded analogously to
Eq.~(\ref{eq:BoundSum}). We get two additional error terms
\begin{align}
\Big|\log \frac{\hZ}{Z}\Big|&\le \frac{3\eps}{4}
+|F(\btau^{t_0})-F(\bsigma^{t_0})|+\Big|\sum_{\ell=1}^m\<\bhtheta(\bsigma^{(\ell)})-\btheta_*(\bsigma^{\ell}),
\bsigma^{(\ell)}-\bsigma^{(\ell-1)}\>\Big|\\
&\le  \frac{7\eps}{8} + \|\bsigma^{(m)}-\bsigma^{(0)}\|_2\max_{\ell\in
[m]} \|\bhtheta(\bsigma^{(\ell)})-\btheta_*(\bsigma^{\ell})\|_2\\
&\le \frac{7\eps}{8} + \xi\sqrt{p}\, .
\end{align}
The latter is bounded by $\eps$ as soon as $\xi\le 1/(8\eps
\sqrt{p})$,
which is guaranteed since we already required $\xi\le
\eps/(16t_0(p,\eps)\sqrt{p})$.

\vspace{0.25cm}

We conclude that the desired approximation error is guaranteed by
using the oracle $\bhtheta$, with accuracy parameter $\xi\le
\eps/(16t_0(p,\eps)\sqrt{p})$. This concludes the proof, since it
corresponds to $(1/\xi)$  polynomial in $p$ and $(1/\eps)$.

\bibliographystyle{amsalpha}

\newcommand{\etalchar}[1]{$^{#1}$}
\providecommand{\bysame}{\leavevmode\hbox to3em{\hrulefill}\thinspace}
\providecommand{\MR}{\relax\ifhmode\unskip\space\fi MR }
\providecommand{\MRhref}[2]{%
  \href{http://www.ams.org/mathscinet-getitem?mr=#1}{#2}
}
\providecommand{\href}[2]{#2}

\end{document}